 \def\section{\@startsection {section}{1}{\z@}%
                                      {-1.75ex \@plus -1ex \@minus -.2ex}%
                                      {1.0ex \@plus.2ex}%
                                      {\normalfont\Large\bfseries}}
 \renewcommand\subsection{\@startsection{subsection}{2}{\z@}%
                                        {-1.5ex\@plus -1ex \@minus -.2ex}%
                                        {0.75ex \@plus .2ex}%
                                        {\normalfont\large\bfseries}}
 \renewcommand\paragraph{\@startsection{paragraph}{4}{\z@}%
                                       {1.0ex \@plus1ex \@minus.2ex}%
                                       {-1em}%
                                       {\normalfont\normalsize\bfseries}}
\long\def\@makecaption#1#2{
   \vskip 10pt 
   \setbox\@tempboxa\hbox{{\footnotesize \textbf{#1.} #2}} 
   \ifdim \wd\@tempboxa >\hsize         
       {\footnotesize \textbf{#1.} #2\par}
     \else                              
       \hbox to\hsize{\hfil\box\@tempboxa\hfil}
   \fi} 
\def\compilelong{}
\def\Aset{A}
\def\ball{\mathsf{B}}
\def\child{\mathsf{ch}}
\def\d{\mathsf{d}}  
\def\dsq{\mathsf{d}_{\sq}}  
\def\dfr{\mathsf{dfr}}
\def\dtw{\mathsf{dtw}}
\def\dmin{\underline{\d}}
\def\dmax{\overline{\d}}
\def\ed{\mathsf{ed}}
\def\eps{\varepsilon}
\def\family{\EuScript{F}}
\def\frechet{Fr\'{e}chet~}
\def\H{H}
\def\pair{\textsc{Pairing}}
\def\parent{p}
\def\rmin{\underline{r}}
\def\rmax{\bar{r}}
\def\sq{\square}
\def\T{\EuScript{T}}
\def\tQ{\tilde{Q}}
\def\V{\mathsf{V}}
\def\rect{\mathsf{R}}
\def\rects{\EuScript{R}}
\def\boundaries{\EuScript{B}}
\def\wei{\omega}
\newcommand*\samethanks[1][\value{footnote}]{\footnotemark[#1]}
\newtheorem{theorem}{Theorem}[section]
\newtheorem{corollary}[theorem]{Corollary}
\newtheorem{lemma}[theorem]{Lemma}
\begin{document}

\title{Approximating Dynamic Time Warping and Edit Distance for a Pair of Point Sequences
  \footnote{Work on this paper is supported by NSF under grants CCF-11-61359,
  IIS-14-08846, and CCF-15-13816, by an ARO grant W911NF-15-1-0408, and by Grant
2012/229 from the U.S.-Israel Binational Science Foundation.}}

\author{
  Pankaj K. Agarwal%
  \thanks{Department of Computer Science,
  Duke University; \url{[pankaj,kylefox,jwpan]@cs.duke.edu}, \url{zhitao.ying@duke.edu}.}
  \and
  Kyle Fox\samethanks
  \and
  Jiangwei Pan\samethanks
  \and
Rex Ying\samethanks}

\maketitle
\begin{abstract}
  We give the first subquadratic-time approximation schemes for dynamic time warping (DTW) and
  edit distance (ED) of several natural families of point sequences in~$\R^d$, for any fixed
  $d \ge 1$.
  In particular, our algorithms compute $(1+\eps)$-approximations of DTW and ED in time
  near-linear for point sequences drawn from k-packed or k-bounded curves, and
  subquadratic for backbone sequences.  Roughly speaking, a curve is
  $\kappa$-packed if the length of its intersection with any ball of radius $r$ is at
  most $\kappa \cdot r$, and a curve is $\kappa$-bounded if the sub-curve between two curve points
  does not go too far from the two points compared to the distance between the
  two points.  In backbone sequences, consecutive points are spaced at
  approximately equal distances apart, and no two points lie very close
  together.
  Recent results suggest that a subquadratic algorithm for DTW or ED is unlikely for an
  arbitrary pair of point sequences even for $d=1$.

  Our algorithms work by constructing a small set of rectangular regions that cover the
  entries of the dynamic programming table commonly used for these distance
  measures.
  The weights of entries inside each rectangle are roughly the same, so we are able to use
  efficient procedures to approximately compute the cheapest paths through these rectangles.
\end{abstract}

\section{Introduction}
\label{sec:intro}

\paragraph*{Motivation.}
Trajectories are functions from a time interval to $\R^d$, for $d \ge 1$, and they
describe how physical systems change over time.  Trajectories are being recorded and
inferred from numerous sources and are often represented as ordered sequences of points.
These sources include GPS sensors in smart phones and vehicles, surveillance videos,
shape-based touch screen authentication patterns, hurricane patterns, and time series
data.  A fundamental task for analyzing trajectory data is to measure the similarity
between trajectories.  For example, computing trajectory similarity is an important step in
object segmentation from video trajectories \cite{BM10}, smart phone authentication using
touch screen trajectories \cite{DHBLH12}, and stock price prediction using historical
patterns \cite{STOCK}.  In many applications, it is not enough to merely \emph{quantify}
how similar pairs of trajectories are; we need to compute correspondences between their
sample points as well.  These correspondences represent shared structures between
trajectories, which can be present not only in trajectories with physical constraint such
as vehicle trajectories, but also in trajectories representing the movement of
pedestrians~\cite{ghb-uihmp-08} or hurricanes~\cite{lhw-tcpgf-07}.  Having an effective
way to identify similar portions between a pair of trajectories can greatly aid in
identifying and understanding these shared structures.  

\paragraph*{Problem statement.}
Let~$P = \langle p_1, \ldots, p_m \rangle$ and~$Q = \langle q_1, \ldots, q_n \rangle$ be
two sequences of points in~$\R^d$ for some fixed $d \ge 1$.
We define a \emph{correspondence} as a pair~$(p_i,q_j)$.
A set $C$ of correspondences is \emph{monotone} if for any pair of correspondences
$(p_i,q_j), (p_{i'},q_{j'})$ with $i' \geq i$ we have $j' \geq j$.
We define the cost of $C$ to be $\sum_{(p,q) \in C} ||pq||$, where $||\cdot||$
is the Euclidean norm.  The similar portions of $P$ and $Q$ are represented by a
set $C$ of monotone correspondences, with the cost of $C$ quantifying the quality of
similarity.
The goal is to compute a monotone set of correspondences with certain properties.
While numerous criteria for computing correspondences have been proposed, we
focus on two, which are widely used: dynamic time warping (DTW) and edit
distance (ED).  They are used for matching various types of sequences such as 
speech signals, DNA and protein
sequences, protein backbones, time-series data, GPS/video trajectories, touch
screen authentication trajectories, etc \cite{rabiner1993fundamentals, WG, EFS, DEKM-book,
  DHBLH12, KKL, MP99, KP00, KR05, Muller-book, CPBTMPCMC-98}.

DTW computes a monotone set of correspondences in which every point in~$P$
and~$Q$ appears at
least once, and minimizes the sum of distances of corresponding pairs of points.
Formally, the cost of DTW, denoted by $\dtw(P,Q)$, is 
  $\dtw(P,Q) = \min_{C} \sum_{(p,q) \in C} ||pq||$,
where the minimum is taken over all sets $C$ of monotone correspondences that
cover all points of $P$ and $Q$.
DTW allows a point to appear in multiple correspondences, so it matches two
sequences effectively
even if the sampling rates are different.

Edit distance (also called Levenshtein distance) seeks a monotone
\emph{matching} on the points in~$P$ and~$Q$ of minimum cost; each point in
$P$ corresponds to at most one point in $Q$ and vice versa.
It also adds an additional constant \emph{gap} penalty~$g$ for each point in~$P
\cup Q$ that
does not appear in any correspondence.
Formally, the cost of ED, denoted by $\ed(P,Q)$, is 
  $\ed(P,Q) = \min_C \sum_{(p,q) \in C} ||pq|| + g(m+n-2|C|)$,
where the minimum is taken over all sets $C$ of monotone 
matchings in the complete bipartite graph $P \times Q$.  More sophisticated
gap penalty functions have been proposed, but we focus on the simple linear gap
penalty function.
By tuning~$g$ correctly, meaningful correspondences can be computed even when faced with
outlier points that arise from measurement errors or short deviations in otherwise similar
trajectories.

Given a parameter $\eps \in (0,1)$, we wish to develop efficient
$(1+\eps)$-approximation algorithms for computing $\dtw(P,Q)$ and $\ed(P,Q)$,
i.e., they return a value $\Delta$ such that $\dtw(P,Q) \le \Delta \le (1+\eps)
\dtw(P,Q)$ or $\ed(P,Q) \le \Delta \le (1+\eps) \ed(P,Q)$.  We are also
interested in computing correspondences that realize these distances.

\paragraph*{Prior results.}
It is well-known that both $\dtw(P,Q)$ and $\ed(P,Q)$, as well as the relevant
correspondences, can be computed in $O(mn)$ time using dynamic programming
\cite{KT2005}.
A series of recent papers show that 
there exists no algorithm for computing $\dtw(P,Q)$ or $\ed(P,Q)$ in
time~$O(n^{2-\delta})$ for any~$\delta > 0$ unless the Strong Exponential Time Hypothesis (SETH)
of Impagliazzo and Paturi~\cite{Impagliazzo2001} is false.
In particular, Backurs and Indyk~\cite{BI2015} showed a lower bound for edit distance, and
Abboud~\etal~\cite{ABW2015} and Bringmann and K\"unnemann~\cite{BK2015}
independently showed lower bounds for DTW.
While most of these lower bounds were presented for the string versions of their respective
problems, the DTW lower bound of Bringmann and K\"unnemann uses sequences of points in~$\R$.
Unless SETH is false, there exists no strictly subquadratic time algorithm for DTW, even in our
setting of point sequences in~$\R^d$.
Similar conditional lower bounds have been shown for other distance and similarity
problems~\cite{AWW2014,BI2015,ABW2015,Bringmann14SETH,BK2015}.
Some of these results suggest that even strongly subquadratic approximation
schemes seem unlikely \cite{ABW2015,Bringmann14SETH}.

In view of the recent lower bounds, a natural question to ask is whether
near-linear, or even subquadratic, algorithms exist for certain natural
families of sequences.  Aronov
\etal~\cite{AHKWW2006} gave subquadratic-time approximation schemes for
the discrete \frechet distance of \emph{$\kappa$-bounded} and \emph{backbone} 
point sequences.
Discrete \frechet distance is similar to DTW except that one uses max instead of
sum in the definition.
Restricting themselves to these families of sequences allowed them to subvert the hardness result of
Bringmann~\cite{Bringmann14SETH} mentioned above.
Driemel~\etal~\cite{DHW2012} extended the approximation results to the continuous
\frechet distance and the even more natural family of \emph{$\kappa$-packed} curves (see
below for definitions of these families of curves);
see also Bringmann and
K\"unnemann~\cite{BK2014} and Bringmann and Mulzer~\cite{BM2015SoCG}.
We note that while (discrete) \frechet distance is a reasonable measure to
compute the similarity between two sequences, it is not effective in identifying
similar portions of the sequences.
Roughly speaking, these algorithms guess the value of (discrete) \frechet distance, say,
$\Delta$, simplify the two sequences within $\frac{\eps}{2}\Delta$ error, and show that
near-linear number of entries in the dynamic-programming table need to be computed by
observing that each point $p$ in one sequence can be matched with a point in the
other (simplified) sequence that lies within distance $\Delta$ from $p$.

Currently, no subquadratic-time approximation results are known for DTW, although there are a number of heuristics
designed to speed up its exact computation in practice; see Wang \etal~\cite{WMDTSK13}.
Subquadratic-time approximation algorithms are known for variants of edit distance, but all of these algorithm have
at least a polylogarithmic approximation ratio~\cite{AKO2010}.
It is not clear how to extend the above algorithms for (discrete) \frechet distance to DTW
or ED because a point $p$ in $P$ may be matched to a point in $Q$ that is far away from
$p$ and thus their idea of computing a small number of entries of the table does not work.

\paragraph*{Our results.}
We present algorithms for computing $\dtw(P,Q)$ and $\ed(P,Q)$ approximately which have
subquadratic running time for several ``well-behaved'' families of input sequences.
The correspondences realizing these distances can also be recovered.  
The two algorithms are almost identical except a few implementation details.
In the worst case, their running time is quadratic for arbitrary point
sequences, but it is near-linear if $P$ and $Q$ are $\kappa$-packed or $\kappa$-bounded
sequences and subquadratic when $P$ and $Q$ satisfy the conditions for a
backbone sequence.  These are the first approximation algorithms that compute
$\dtw(P,Q)$ and $\ed(P,Q)$ for such point sequences in subquadratic time. 

For $x \in \R^d$ and $r \in \R^+$, let $\ball(x,
r)$ denote the ball of radius $r$ centered at $x$.  Given $\kappa \in \R^+$, a
curve $\gamma$ in $\R^d$ is \emph{$\kappa$-packed} if the length of $\gamma$
inside any ball of radius $r$ is bounded by $\kappa r$ \cite{DHW2012}, and
$\gamma$ is \emph{$\kappa$-bounded} if for any $0 \le t < t' \le 1$, 
$ \gamma[t:t'] \subseteq \ball\left(\gamma(t), \tfrac{\kappa}{2} ||\gamma(t)
\gamma(t')||\right) \cup \ball\left(\gamma(t'), \tfrac{\kappa}{2} ||\gamma(t)
\gamma(t')||\right),$
where $\gamma: [0,1] \rightarrow \R^d$ and $\gamma[t:t']$ is the portion of
$\gamma$ between $\gamma(t)$ and $\gamma(t')$ \cite{AKW2004}. 
We say a point sequence $P$ is $\kappa$-packed (resp.~$\kappa$-bounded) if the polygonal
curve that connects points of $P$ is $\kappa$-packed (resp.~$\kappa$-bounded).
A point sequence $P = \langle p_1, \ldots, p_m \rangle$ is said to be a backbone
sequence if it satisfies the following two conditions:
(i) for any pair of non-consecutive integers $i, j \in [1, m]$, $||p_i p_j|| > 1$; 
(ii) for any integer $i$ in $(1, m]$, $c_1 \le ||p_{i-1} p_i|| \le c_2$, 
where $c_1, c_2$ are positive constants \cite{AHKWW2006}.
These sequences are commonly used to model protein backbones where each vertex represents a 
$C_\alpha$ atom, connected to its neighbors via covalent bonds.
See Figure~\ref{fig:alg}(a), (b) for examples of $\kappa$-packed and $\kappa$-packed
curves and (c) for an example of backbone sequence.  
We use $\gamma_P$ to denote the polygonal curve connecting the points of
sequence $P$.
Our results are summarized in the following theorems.


\begin{theorem}
  Let $P$ and $Q$ be two point sequences of length at most $n$ in $\R^d$, and
  let $\eps \in (0,1)$ be a parameter.  An
  $(1+\eps)$-approximate value of $\dtw(P,Q)$ can be computed in
  $O(\frac{\kappa}{\eps} n\log n)$, $O(\frac{\kappa^d}{\eps^d} n \log n)$, and
  $O(\frac{1}{\eps} n^{2-1/d} \log n)$ time if $P,Q$ are $\kappa$-packed,
  $\kappa$-bounded, and backbone sequences, respectively.  
  \label{thm:dtw}
\end{theorem}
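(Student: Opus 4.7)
The plan is to replace the quadratic DP for DTW with a shortest-path computation in a sparse DAG whose vertices lie on the boundaries of a small collection $\mathcal{R}$ of axis-aligned rectangles that cover the relevant portion of the $m\times n$ DP table. Each rectangle is chosen so that the weight function $(i,j)\mapsto \|p_iq_j\|$ is nearly constant on it, which collapses the in-rectangle DP to a closed form, while the total number of rectangles is controlled by the geometric regularity of the input family.

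I would first obtain a constant-factor estimate $\Delta$ of $\dtw(P,Q)$ by exponential search over powers of $(1+\eps)$; this contributes one $\log n$ factor via $O(\log n)$ calls to a decision procedure. Assuming a correct guess, I cover the plausible DP cells---those where an optimal warping might pass---by rectangles on which $\|p_iq_j\|$ varies by at most $O(\eps\Delta/n)$. Any monotone sub-warping of length $\ell$ through a rectangle $R\in\mathcal{R}$ then costs within $O(\eps\Delta\ell/n)$ of a uniform-weight estimate, so telescoping over an optimal warping of combined length $O(n)$ yields total additive error $O(\eps\Delta)\le O(\eps\cdot\dtw(P,Q))$, giving the required $(1+\eps)$-approximation.

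The core combinatorial step is bounding $|\mathcal{R}|$ per family. I would overlay a $\Theta(\eps\Delta/\kappa)$-scale grid on $\R^d$ and let each rectangle correspond to a pair of grid cells whose index contents are close enough in the DP table to matter. For $\kappa$-packed curves, the packing bound forces each cell to contain only $O(\kappa/\eps)$ curve length's worth of points, and a charging argument along $\gamma_P$ and $\gamma_Q$ yields $|\mathcal{R}|=O(\kappa n/\eps)$. For $\kappa$-bounded curves, the same strategy applied locally incurs an extra $(\kappa/\eps)^{d-1}$ factor from enumerating the cells inside a $d$-ball of radius $O(\Delta)$, giving $|\mathcal{R}|=O((\kappa/\eps)^d n)$. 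For backbone sequences the only structural handle is the unit separation between non-consecutive vertices, so I would fall back to a coarser grid of side $\Theta(n^{1/d})$ and obtain $|\mathcal{R}|=O(n^{2-1/d}/\eps)$.

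Lastly, I would assemble a DAG $G$ on the boundary cells of rectangles in $\mathcal{R}$: in-rectangle edges between each entry/exit boundary pair carry the near-uniform cheapest monotone path cost (computable in time linear in rectangle perimeter using the closed-form approximation), while inter-rectangle edges encode one-step adjacencies in the original DP. A topological-order shortest path from $(1,1)$ to $(m,n)$ returns the approximate $\dtw$ value and realizing correspondences in time $\tilde{O}(|\mathcal{R}|)$, explaining the stated running times. The main obstacle I anticipate is two-fold: first, verifying that the cover $\mathcal{R}$ really contains an optimal warping (so that restricting to cells with $\|p_iq_j\|=O(\Delta)$ loses only a $(1+O(\eps))$ factor rather than missing a cheap ``thin diagonal''), and second, preserving monotonicity when a warping crosses between overlapping rectangles, which requires the boundary-cell DAG to faithfully simulate the original monotone recurrence.
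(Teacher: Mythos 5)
Your high-level plan---cover the relevant DP cells by rectangles on which the weight is nearly constant, then restrict the shortest-path computation to rectangle boundaries---is the same as the paper's, but two of your steps have genuine gaps. First, the initial estimate: you propose a constant-factor estimate $\Delta$ of $\dtw(P,Q)$ via exponential search with $O(\log n)$ calls to a decision procedure, yet no subquadratic decision procedure for DTW is available---that is essentially the problem being solved, so the step is circular. The paper instead settles for a much weaker $O(n)$-factor estimate: it computes a $2$-approximation of the discrete Fr\'echet distance (for which subquadratic algorithms exist for these families) and uses $\dfr(P,Q)\le\dtw(P,Q)\le 2n\cdot\dfr(P,Q)$; the resulting $O(n)$ range of relevant scales is then absorbed into $O(\log n)$ levels of a quadtree.

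Second, and more seriously, your error budget and your grid scale are incompatible. You want each rectangle to have \emph{absolute} weight variation $O(\eps\Delta/n)$ so that the error telescopes to $O(\eps\Delta)$ over an optimal warping of length $O(n)$; but to guarantee that, the cells of $\R^d$ generating the rectangles must have diameter $O(\eps\Delta/n)$, and the number of cell pairs at distance $O(\Delta)$ then far exceeds the claimed near-linear rectangle counts. Conversely, the $\Theta(\eps\Delta/\kappa)$-scale grid you actually propose only bounds the per-rectangle variation by $O(\eps\Delta/\kappa)$, which summed over $\Theta(n)$ steps gives error $O(\eps\Delta n/\kappa)\gg\eps\cdot\dtw(P,Q)$. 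The paper's resolution is a multi-scale, WSPD-style construction on the quadtree: a rectangle built from a well-separated node pair $(u,v)$ need only satisfy the \emph{relative} bound $|\wei(i,j)-\wei_{\rect}|\le\frac{\eps}{2}\max\{\wei_{\rect},\dmin/2n\}$, so far-apart cell pairs may be coarse. The total error along the optimal path is then $\frac{\eps}{2}\sum\left(\wei(i,j)+\dmin/2n\right)\le\eps\,\dtw(P,Q)$, while the rectangle count stays near-linear via packing arguments at each WSPD scale. Without this relative error criterion you cannot obtain both the approximation guarantee and the rectangle bounds. (A smaller implementation point: computing, for each exit point of a rectangle, the minimum over all dominated entry points in total time linear in the perimeter requires a range-minimum structure supporting insertions, as in the paper's adaptation of Fischer--Heun; a naive per-pair computation is quadratic in the perimeter.)
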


\begin{theorem}
  Let $P$ and $Q$ be two point sequences of length at most $n$ in $\R^d$, and
  let $\eps \in (0,1)$ be a parameter.  An
  $(1+\eps)$-approximate value of $\ed(P,Q)$ can be computed in
  $O(\frac{\kappa}{\eps} n\log n)$, $O(\frac{\kappa^d}{\eps^d} n \log n)$, and
  $O(\frac{1}{\eps} n^{2-1/(d+1)} \log n)$ time if $P,Q$ are $\kappa$-packed,
  $\kappa$-bounded, and backbone sequences, respectively.  
  \label{thm:ed}
\end{theorem}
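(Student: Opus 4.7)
I will adapt the proof strategy of Theorem~\ref{thm:dtw} to the edit-distance recurrence $D(i,j) = \min\{D(i-1,j-1)+\|p_iq_j\|,\, D(i-1,j)+g,\, D(i,j-1)+g\}$. The overall plan is: guess a value $\Delta$ via exponential search so that $\Delta \le \ed(P,Q) \le (1+\eps)\Delta$; cover the $m\times n$ DP table by axis-parallel rectangles inside which the match weight $\|p_iq_j\|$ is essentially constant; and finally compute an approximate shortest monotone path on the resulting coarse graph.

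For the rectangle cover, I apply an $(\eps\Delta/n)$-simplification to $\gamma_P$ and $\gamma_Q$: consecutive points lying in a small ball are grouped into one index cluster, and the product of a $P$-cluster with a $Q$-cluster forms a rectangle $\rect$ in the DP table on which the match weight is constant up to a $(1+\eps)$ factor. I retain only rectangles whose representative weight $w(\rect)$ is at most $2\Delta$, since any cheaper region can be replaced by pure gap moves of weight $g$ per unit length. Standard packing arguments in the style of~\cite{DHW2012,AHKWW2006} bound the number of retained rectangles by $O((\kappa/\eps)\,n)$ for $\kappa$-packed curves and $O((\kappa/\eps)^d\,n)$ for $\kappa$-bounded curves. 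For backbone sequences, I expect to obtain $O(n^{2-1/(d+1)}/\eps)$ rectangles; the extra $1/d\to1/(d+1)$ in the exponent, compared with the DTW bound of Theorem~\ref{thm:dtw}, reflects the additional degree of freedom that the gap term introduces when discretizing the diagonal slab.

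The coarse graph places a node at each corner of each retained rectangle, with vertical/horizontal edges of weight $g$ per unit modeling pure gap moves, and, for every rectangle $\rect$ and every pair consisting of an entry point on its left/top boundary and an exit point on its right/bottom boundary, an edge whose weight is the minimum ED sub-cost of a monotone path through $\rect$ between the two points. Because the match weight inside $\rect$ is a single value up to $(1+\eps)$, this sub-cost is determined by only three integers---the number of matches and the numbers of $P$- and $Q$-points skipped---so the Monge structure of the constant-weight ED recurrence allows it to be computed in time polylogarithmic in the rectangle's dimensions. A single-source shortest-path computation on the coarse graph then returns a $(1+\eps)$-approximation of $\ed(P,Q)$, with the optimal matching recoverable by back pointers.

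The main obstacle is precisely this per-rectangle subroutine. For DTW every row and column of $\rect$ must be covered, so the optimal sub-path is a monotone staircase with a clean closed-form cost. For ED, rows and columns may be skipped at cost $g$, the optimum balances matches against gaps, and it depends on the particular boundary values brought in by the global path. Proving that this balance can still be resolved in polylogarithmic time per rectangle---and performing the refined packing that yields the $2-1/(d+1)$ exponent for backbone sequences---are the two places where the argument requires genuine new work beyond the DTW case.
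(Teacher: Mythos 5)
Your high-level framework (constant-value estimate, rectangle cover with near-constant match weight, approximate shortest monotone path over boundary points) matches the paper's, and your observation that the constant-weight ED sub-cost inside a rectangle is determined by the number of matches versus gaps is sound --- in fact that part is easier than you fear: the cost is linear in the number of matches, so the optimum sits at an extreme and splits into the two cases $2g\le \wei_\rect$ (all-rectilinear) and $2g>\wei_\rect$ (maximally diagonal), each computable in amortized $O(1)$ with the same range-min machinery as DTW. The genuine gap in your proposal is elsewhere: you never confront the fact that, unlike DTW, the optimal ED path need not stay inside the union $U$ of rectangles at all. In DTW every grid point on the path contributes its weight, so a point with $\wei(i,j)>\dmax$ immediately disqualifies the path and the cover suffices; in ED a path can traverse arbitrarily high-weight regions paying only $g$ per horizontal or vertical step. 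Your coarse graph places gap edges only at rectangle corners, which does not capture a path that exits $U$ at an arbitrary boundary point $(a,b)$, wanders outside $U$ using gaps only, and re-enters at an arbitrary $(i,j)$. Handling these excursions requires (as the paper does) arguing that such a subpath is all-gap except possibly its first step, taking a minimum over all of $\partial U$ dominated by $(i,j)$ with key $\awed(a,b)-(a+b)g+\min(0,\wei(a,b)-2g)$, and maintaining this with a dynamic column-minimum tree at $O(\log n)$ per operation.

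This omission also explains why your account of the backbone exponent is off. The shift from $2-1/d$ to $2-1/(d+1)$ does not come from ``discretizing the diagonal slab''; it comes from bounding $|\partial U|$, the number of times a row of the table enters $U$, which governs the cost of the excursion queries. That bound balances a volume-packing estimate of $O(\dmax^d)$ points of $P$ per ball of radius $O(\dmax)$ against a curve-length estimate of $O(m/(\eps\dmax))$ entering points, optimized at $\dmax=(m/\eps)^{1/(d+1)}$. Separately, your step (i) via exponential search to within a $(1+\eps)$ factor would cost an extra $O(\eps^{-1}\log n)$ multiplicative factor; a constant- or $O(n)$-factor estimate suffices (here simply $\dmin=g$ unless the pure diagonal path already costs at most $g$), since the rectangle cover absorbs the remaining slack.
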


Recall that the standard dynamic programming algorithm for computing $\dtw(P,Q)$
or $\ed(P,Q)$ constructs a weighted grid $\V = \{(i,j) \mid 1\le i
\le m, 1\le j \le n\}$ and formulates the problem as computing a minimum-weight
path from $(1,1)$ to $(m,n)$.  Based on the observation that nearby grid points
may have similar weights when $P,Q$ are ``well-behaved'', our main idea is to construct a small
number of potentially overlapping rectangular regions of $\V$, whose union contains the minimum-weight path in $\V$,
such that all grid points within each rectangle have similar weights.
We show how to construct the rectangles 
so that the number of ``boundary points'' of the rectangles is near linear when $P,Q$ are 
$\kappa$-packed or $\kappa$-bounded and subquadratic when $P, Q$ are 
backbone sequences.
We then describe an efficient procedure to compute approximate minimum-weight paths from $(1, 1)$ 
to all boundary points.

The algorithm framework is quite general and can work for
a variety of similar distance measures based on monotone correspondences.
For example, our results immediately generalize to variants of dynamic time warping and
edit distance that use the $k$-th powers of distance between points instead of their
Euclidean distance for any constant $k > 0$.  Moreover, the framework may prove
useful in designing subquadratic-time algorithms for other problems that can be
solved with standard dynamic programming.

\begin{figure}
  \centering
  \begin{tabular}{ccc}
    \includegraphics[height=2cm]{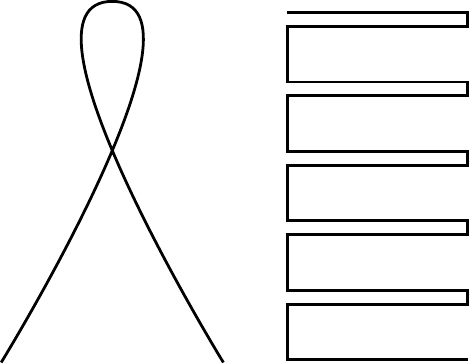} &
    \includegraphics[height=2cm]{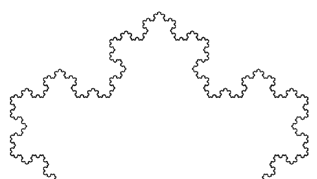} &
    \includegraphics[height=2cm]{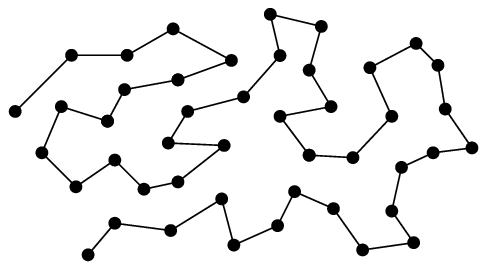} \\
    (a) & (b) & (c)
  \end{tabular}
  \caption{(a) $\kappa$-packed curves that are not $\kappa$-bounded.
    (b) The top half of the Koch snowflake is a $\kappa$-bounded curve that is not $\kappa$-packed.
    (c) A backbone sequence that is neither $\kappa$-bounded nor $\kappa$-packed.}
  \label{fig:alg}
\end{figure}

\def\hsq{\hat{\sq}}
\def\adtw{\hat{\dtw}}
\def\wdtw{\mu} 
\def\awdtw{\tilde{\wdtw}}
\def\Sub{S}
\def\hm{\hat{m}}
\def\hn{\hat{n}}
\def\hL{\hat{L}}

\section{Algorithm for DTW} \label{sec:alg}

Let $P = \langle p_1, \ldots, p_m \rangle$ and $Q = \langle q_1, \ldots, q_n \rangle$ be
two point sequences in $\R^d$, and let $\eps \in (0,1)$ be a parameter.  We present a
$(1+\eps)$-approximation algorithm for computing 
$\dtw(P,Q)$, which takes subquadratic time if $P,Q$ are ``well-behaved''.  
Without loss of generality, assume that $m \le n$ and $\eps \ge
1/n$.  If $\eps < 1/n$, we can simply compute $\dtw(P,Q)$ in $O(mn) = O(n/\eps)$ time via
dynamic programming.

Given positive integers $i < i'$, let $[i:i'] := \{i, i+1, \ldots, i'\}$, and let $[i] :=
[1:i]$.  Let $\V= [m] \times [n] $ denote a set of grid 
points\footnote{Note that in this paper, a point can refer to either a grid point in $\V$ or a
sequence point from $P\cup Q$.}
in $\R^2$, and define a
weight function $\wei: \V \rightarrow \R_{\ge 0}$ where $\wei(i,j)$ is the Euclidean
distance between $p_i$ and $q_j$.  Two different grid points in $\V$ are said to be
\emph{neighboring} if they differ by at most 1 in each coordinate.  We say $(i,j)$
dominates $(i', j')$ if $i \ge i'$ and $j \ge j'$.  A path $\pi$ is a sequence of
neighboring grid points; $\pi$ is \emph{admissible}
if it is non-decreasing in both coordinates.  Define the weight of the path $\pi$,
$\wei(\pi)$, as the sum of the weights of the grid points along $\pi$.  Define $\wdtw(i,j)$
as the minimum weight of an admissible path from $(1,1)$ to $(i,j)$.  So $\dtw(P,Q) =
\wdtw(m,n)$ and $\wdtw(m,n)$ can be computed in $O(n^2)$ time via dynamic
programming.  

For $1 \le i_1 \le i_2 \le m$ and for $1 \le j_1 \le j_2 \le n$, the set of grid points
$[i_1:i_2] \times [j_1:j_2]$ is called a \emph{rectangle}.  A point $(i,j) \in \V$ is a
\emph{boundary point} of this rectangle if $i \in \{i_1, i_2\}$ or $j \in \{j_1, j_2\}$.
We first outline the algorithm for computing an $(1+\eps)$-approximate value of
$\wdtw(m,n)$, and then describe it in detail in Sections~\ref{sec:alg_step1}-\ref{sec:fill_table}.  
Section~\ref{sec:run_time} analyzes its running time for well-behaved
point sequences.

\begin{enumerate}[(i)]
  \item Compute an estimate $\dmin$ of~$\dtw(P,Q)$ such that $\dmin \le \dtw(P, Q) \le 4n
    \dmin$.  Let $\dmax = 4n \dmin$.

  \item Compute a set $\rects$ of (possibly overlapping) rectangles and a weight $\wei_{\rect}$ for each
    rectangle $\rect \in \rects$, such that: 
    \begin{enumerate}[(a)]
      \item for all $\rect \in \rects$ and $(i,j) \in \rect$, $|\wei(i,j) - \wei_{\rect}|
        \le \frac{\eps}{2} \max\{ \wei_{\rect}, \dmin/2n\}$,
      \item if $(i,j) \in \V$ and $\wei(i,j) \le \dmax$, then there exists rectangle
        $\rect \in \rects$ such that $(i,j) \in \rect$. 
    \end{enumerate}

    Conditions (a) and (b) above ensure that the weights of
    grid points in each rectangle are roughly the same, and the minimum-weight admissible
    path between $(1,1)$ and $(m,n)$ is contained in the union of the rectangles.  See
    Figure~\ref{fig:rectangles}(a).  

    Let $\boundaries = \bigcup_{\rect \in \rects} \partial \rect$ be the set of boundary
    points of the rectangles in $\rects$.
    The sizes of~$\rects$ and~$\boundaries$ depend on the input sequences $P, Q$. 
    In the worst case~$|\rects|,|\boundaries| = \Theta(mn)$, but they are subquadratic
    if $P,Q$ are well-behaved.

  \item For every $(i,j) \in \boundaries$, compute a $(1+\eps)$-approximate value
    $\awdtw(i,j)$ of $\wdtw(i,j)$, i.e., $\wdtw(i,j) \le \awdtw(i,j) \le (1+\eps)
    \wdtw(i,j)$.

  \item Return $\awdtw(m,n)$.
\end{enumerate}

\begin{figure}
  \centering
  \begin{tabular}{ccc}
    \includegraphics[height=3cm]{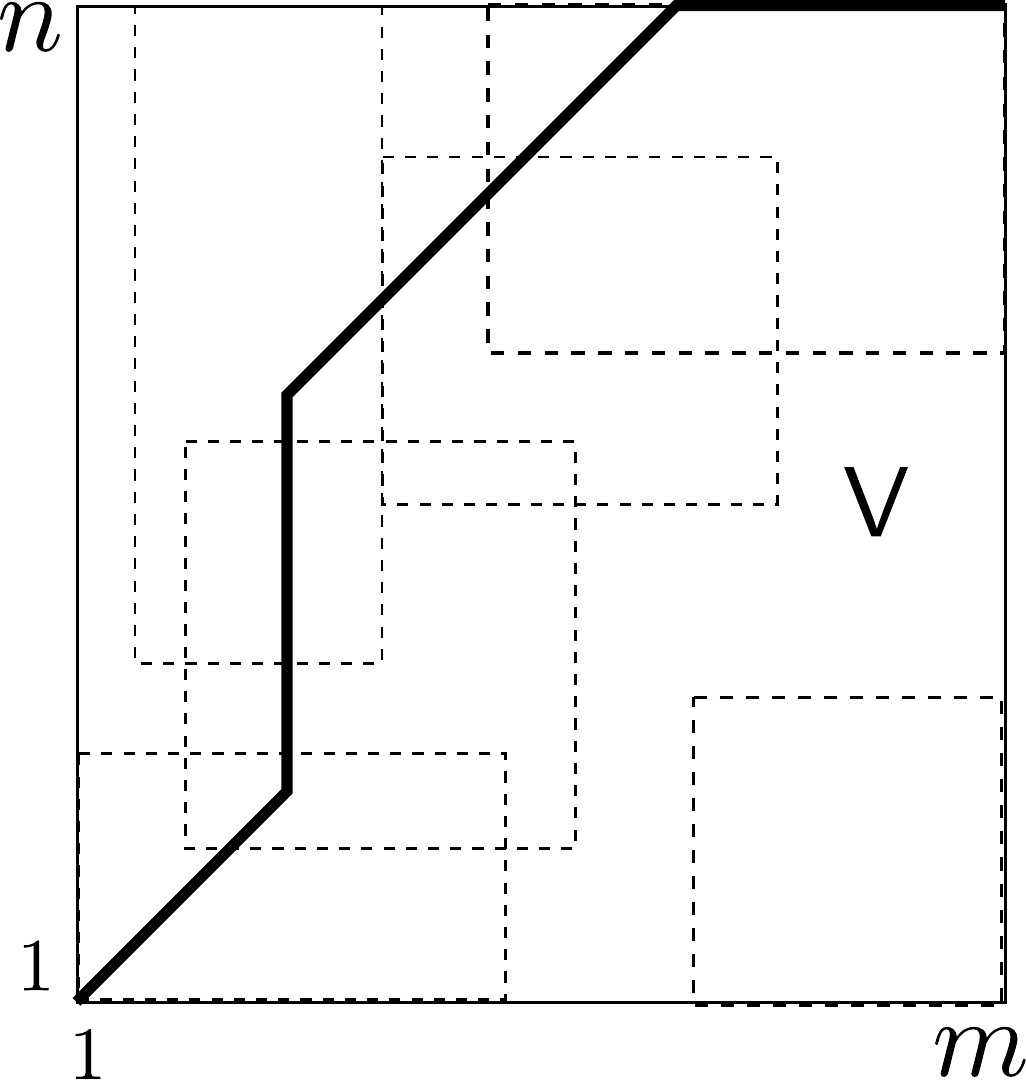} & \hspace{1cm} & 
    \includegraphics[height=3cm]{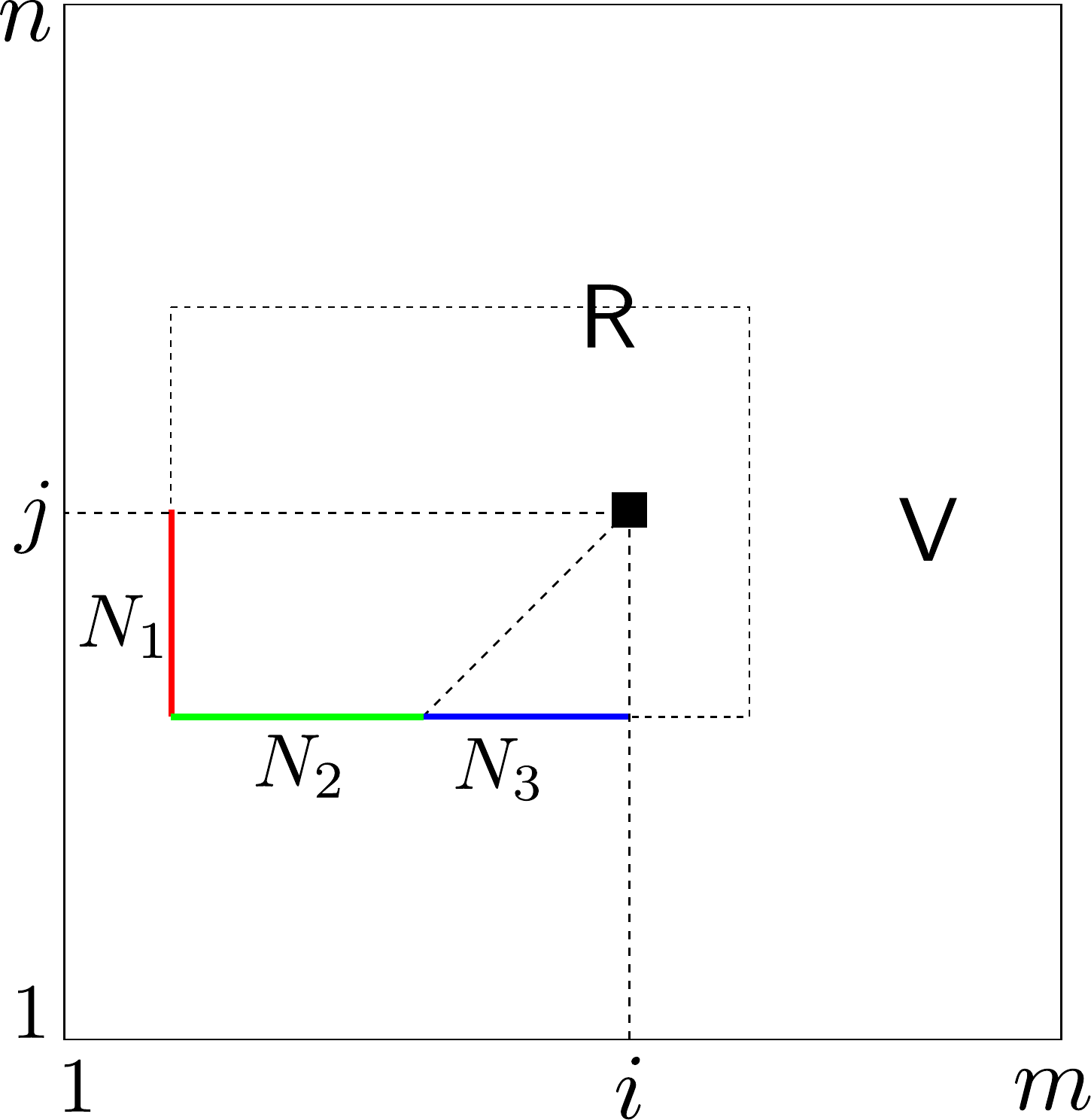} \\
    (a) & \hspace{1cm} & (b)
  \end{tabular}
  \caption{(a) Illustration of our algorithm: compute a small set of rectangles that covers
  the optimal admissible path from $(1,1)$ to $(m,n)$ (drawn in bold); (b) partitioning 
  the boundary points of rectangle $\rect$ dominated by $(i,j)$.}
  \label{fig:rectangles}
\end{figure}

\subsection{An $O(n)$ approximation} \label{sec:alg_step1}

Let $\dfr(P,Q)$ denote the discrete \frechet distance between $P$ and $Q$, i.e.,
replace sum with max in the definition of~$\dtw(P,Q)$.
\begin{lemma}
  $\displaystyle \dfr(P,Q) \le \dtw(P,Q) \le 2n \cdot \dfr(P,Q)$
  \label{lemma:n_approx}
\end{lemma}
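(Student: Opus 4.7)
The plan is to prove the two inequalities separately by using an optimal correspondence set for one measure as a feasible set for the other, together with a structural bound on the size of any monotone cover.

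For the lower bound $\dfr(P,Q) \le \dtw(P,Q)$, I would take an optimal DTW correspondence set $C^*$, so that $\dtw(P,Q) = \sum_{(p,q)\in C^*} \|pq\|$. Since the definition of a monotone cover is identical for the two measures (the measures differ only in aggregating the distances by $\sum$ versus $\max$), $C^*$ is also feasible for $\dfr$. Hence $\dfr(P,Q) \le \max_{(p,q)\in C^*} \|pq\| \le \sum_{(p,q)\in C^*} \|pq\| = \dtw(P,Q)$, where the middle step uses non-negativity of the Euclidean distances.

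For the upper bound $\dtw(P,Q) \le 2n\cdot\dfr(P,Q)$, I would symmetrically take an optimal Fréchet set $C^{\dfr}$ with $\max_{(p,q)\in C^{\dfr}} \|pq\| = \dfr(P,Q)$ and use it as a feasible DTW cover, yielding $\dtw(P,Q) \le \sum_{(p,q)\in C^{\dfr}} \|pq\| \le |C^{\dfr}| \cdot \dfr(P,Q)$. What remains is to bound $|C^{\dfr}|$. Any monotone correspondence set forms a chain in the grid $[m]\times[n]$; since $p_1$ and $q_1$ must both be covered, monotonicity forces $(p_1,q_1)\in C^{\dfr}$, and analogously $(p_m,q_n)\in C^{\dfr}$. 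The longest chain from $(1,1)$ to $(m,n)$ in the grid order contains at most $m+n-1$ elements, so $|C^{\dfr}| \le m+n-1 \le 2n-1 < 2n$, completing the bound.

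I do not anticipate any real obstacle. The only subtlety worth stating explicitly is the counting argument that a monotone cover has at most $m+n-1$ pairs; this is an immediate consequence of the fact that along a monotone sequence of correspondences, each successive pair must strictly increase either $i$ or $j$ (or both), so the number of pairs is bounded by the number of grid steps from $(1,1)$ to $(m,n)$.
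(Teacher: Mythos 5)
Your proof is correct and follows essentially the same route as the paper: both directions use the optimal correspondence set (equivalently, admissible path) for one measure as a feasible solution for the other, with $\max \le \sum \le |C|\cdot\max$ and the observation that a monotone cover has at most $m+n-1 \le 2n$ pairs. The paper phrases this in terms of admissible grid paths rather than correspondence sets, but the argument is the same.
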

\begin{proof}
  Let $\pi^*$ be the minimum-weight admissible path from $(1,1)$ to $(m,n)$ 
  corresponding to $\dtw(P,Q)$.  Then
  $$ \dfr(P,Q) \le \max_{(i,j) \in \pi^*} ||p_iq_j|| \le \sum_{(i,j) \in \pi^*} 
  ||p_iq_j|| = \dtw(P,Q).$$
  Similarly, let $\overline{\pi}$ be the admissible path corresponding to
  $\dfr(P,Q)$.  Then
  $$ \dtw(P,Q) \le \sum_{(i,j) \in \overline{\pi}} ||p_iq_j|| \le 2n \max_{(i,j) \in
    \overline{\pi}} ||p_iq_j|| = 2n \dfr(P,Q).$$
    The second inequality follows because $|\overline{\pi} \cap \V| \le m+n \le 2n$.
\end{proof}

Aronov \etal~\cite{AHKWW2006} gave a near-linear time algorithm for computing the
approximate discrete \frechet distance between $\kappa$-bounded point sequences.  
Their algorithm directly implies an $O(\kappa^d n\log
n)$-time 2-approximation algorithm for computing $\dfr(P,Q)$ for
$\kappa$-bounded sequences.  
They also prove that the same algorithm computes the
2-approximation of the discrete \frechet distance between backbone sequences in
$O(n^{2-2/d})$ time.  With a slight modification of their analysis, it turns out that 
their algorithm also works for
$\kappa$-packed sequences.  The details can be found in the Appendix.

\begin{lemma}[\hspace{-1sp}\cite{AHKWW2006}]
  A 2-approximation of $\dfr(P,Q)$ can be computed in time $O(\kappa n \log n)$,
  $O(\kappa^d n \log n)$, and $O(n^{2-2/d})$ if 
  $P,Q$ are $\kappa$-packed, $\kappa$-bounded, and backbone sequences, respectively.
  \label{lemma:2_approx}
\end{lemma}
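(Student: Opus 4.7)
The plan is to use the near-linear-time decision procedure for discrete \frechet distance of Aronov~\etal~\cite{AHKWW2006} inside an exponential search on the target distance, together with a new free-space size bound for $\kappa$-packed inputs; the other two cases are then immediate from their analysis.

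First I would obtain a cheap initial estimate $r_0$ of $\dfr(P,Q)$ that lies within a factor $n$ of optimum, e.g.\ from $||p_1 q_1||$ and the diameter of $P \cup Q$. I would then run the feasibility test on the geometric sequence of radii $r_0, 2r_0, 4r_0, \ldots$ and return the smallest $r$ for which a monotone path from $(1,1)$ to $(m,n)$ exists through the ``free'' entries $\{(i,j) : ||p_i q_j|| \le r\}$ of $\V$. Since the preceding radius $r/2$ is infeasible, this yields a 2-approximation, and because the running times grow geometrically, the total work is dominated by the final iteration.

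The feasibility test for a fixed radius $r$ proceeds in three steps: (i) simplify $\gamma_P$ and $\gamma_Q$ by greedily deleting any vertex within distance $\alpha r$ of the previously retained vertex, for a sufficiently small constant $\alpha > 0$; (ii) bucket the surviving vertices of (the simplified) $Q$ in a grid of side $\Theta(r/\sqrt{d})$ so that, for each retained $p_i$, all retained $q_j$ with $||p_i q_j|| \le r$ can be listed in $O(\log n + k_i)$ time, where $k_i$ is the output size; (iii) sweep over the resulting set of free pairs in monotone order and propagate reachability in $\V$. Aronov~\etal~already prove correctness of the 2-approximation (choosing $\alpha$ so that the simplification error is absorbed into the factor 2) and analyze steps (i)--(iii) for the $\kappa$-bounded and backbone cases, yielding the $O(\kappa^d n \log n)$ and $O(n^{2-2/d})$ bounds verbatim.

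The main new piece, which I expect to be the main obstacle, is the free-space size bound for $\kappa$-packed inputs. The key claim is that after simplification at scale $r$, any two retained vertices of $Q$ are at Euclidean distance at least $\alpha r$ apart, so any ball of radius $r$ contains $O(\kappa/\alpha) = O(\kappa)$ retained vertices: by $\kappa$-packedness, the length of $\gamma_Q$ inside such a ball is at most $\kappa r$, while consecutive retained vertices along $\gamma_Q$ contribute curve-length at least $\alpha r$ each. Summed over the $O(n)$ retained vertices of $P$, this gives $\sum_i k_i = O(\kappa n)$ free pairs, so step (ii) takes $O(\kappa n \log n)$ time and step (iii) takes $O(\kappa n)$ time, producing the claimed $O(\kappa n \log n)$ total running time for $\kappa$-packed sequences.
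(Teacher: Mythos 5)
Your decision procedure and your packing argument for the $\kappa$-packed case are essentially the paper's: simplify both sequences at scale $\Theta(r)$, note that consecutive retained vertices of the simplified $Q$ are separated by curve length at least $\alpha r$ along $\gamma_Q$, and charge each retained vertex lying in a ball of radius $r$ to $\alpha r$ units of curve length. One small care point: the charged portion of $\gamma_Q$ need not stay inside $\ball(q,r)$, so you must do the length accounting in the enlarged ball $\ball(q,(1+\alpha)r)$, as the paper does; this does not change the $O(\kappa)$ bound per retained point of $P$. That core content is sound.

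The genuine gap is in the search wrapper. First, $\|p_1q_1\|$ and $\diam(P\cup Q)$ do bracket $\dfr(P,Q)$, but their ratio is unbounded: if $P$ and $Q$ share endpoints yet diverge in the middle, $\|p_1q_1\|=0$ while $\dfr(P,Q)$ can be anywhere in $(0,\diam]$, so your doubling search has no starting point within a factor $n$ of the answer and no bound on its number of iterations. Second, the claim that "the running times grow geometrically, so the total work is dominated by the final iteration" is false --- the entire point of simplifying at scale $\alpha r$ is that the decision procedure costs $O(\kappa n\log n)$ essentially \emph{independently} of $r$, so the total cost is (number of iterations) times that, and you have not bounded the number of iterations. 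Third, composing a factor-$2$ radius grid with a $(1+\eps)$-approximate decision procedure yields only a $2(1+\eps)$-approximation, not $2$. The paper avoids all three problems by using a well-separated pair decomposition to produce $O(n)$ candidate distances that $(1+1/5)$-approximate every pairwise distance $\|p_iq_j\|$ (one of which realizes $\dfr(P,Q)$), and binary searching over these candidates with $O(\log n)$ calls to the decision procedure. Substituting that search (or any precomputed near-linear candidate set with bounded approximation error) for your doubling search repairs the argument; note also that a fixed constant-factor approximation, not exactly $2$, is all the main algorithm actually needs from this lemma.
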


Let $\overline{\dfr}(P,Q)$ be the 2-approximate discrete \frechet distance computed using
Lemma~\ref{lemma:2_approx}, i.e., $\dfr(P,Q) \le \overline{\dfr}(P,Q) \le 2\cdot \dfr(P,Q)$.  Set
$\dmin = \overline{\dfr}(P,Q)/2$. By Lemma~\ref{lemma:n_approx}, $\dmin \le \dtw(P,Q) \le 4n \dmin$.

\subsection{Computing rectangles $\rects$}
\label{sec:compute_rect_partition}

Let $\H$ be an axis-aligned
hypercube in $\R^d$ that contains $P\cup Q$.  Let $\T$ be a \emph{quadtree}, a $2^d$-way
tree, on $P\cup Q$.
Each node $v$ of~$\T$ is associated with an axis-aligned box $\sq_v$.  The root of
$\T$ is associated with $\H$.  A node $v$ is a leaf if $|\sq_v \cap (P\cup Q)| \le 1$.
The boxes associated with the children of a node $v$ are obtained by partitioning
$\sq_v$ into $2^d$ congruent hypercubes --- the side length of each resulting box is half
that of $\sq_v$.  
For a node $v \in \T$, let $\parent(v)$ denote its parent,
$\child(v)$ the set of children of $v$, $\Delta(v)$ the side length of $\sq_v$,
$P_v = P \cap \sq_v$, and $Q_v = Q \cap \sq_v$.  Let $m_v = |P_v|$ and
$n_v = |Q_v|$.  For two nodes $u,v \in \T$, let $\dsq(u,v) = \min_{p,q \in \sq_u\times
\sq_v} ||pq||$ denote the distance between $\sq_u$ and $\sq_v$.  
Two nodes $u, v$ are said to be \emph{neighboring} if $u$ and $v$ are in the same level
of $\T$ and $\sq_u$ and $\sq_v$ share a facet.
We do not construct the entire $\T$ but only a portion as described below.

Without loss of generality, suppose the side length of $\H$ is a power of 2.
Let $\rmin$ and $\rmax$ be powers of 2 such that $\rmin \le \eps
\dmin / 2n \le 2 \rmin$ and $\rmax \le 4\dmax \le 2\rmax$.
We call a node $v$ of $\T$ \emph{active} if $\Delta(v) \in [\rmin, \rmax]$ and $m_v+n_v >
0$. Let $\Aset$ denote the
set of active nodes of $\T$. We construct the set $\Aset$ of active nodes of $\T$ and the sets $P_v, Q_v$ for each
active node $v \in \Aset$.  By definition, the active nodes lie in a portion of the quadtree $\T$ of height
$O(\log(\rmax/\rmin)) = O(\log n)$.  Thus, $|\Aset| = O(n)$ and $\sum_{v \in \Aset} (m_v +
n_v) = O(n \log n)$.
Computing $\Aset$ and $P_v, Q_v$ for all $v \in \Aset$ takes $O(n \log n)$ time.

To compute the rectangles in $\rects$, we first construct a family
$\family = \Set{ (u_1,v_1), \ldots, (u_s, v_s)}$ of ``well-separated'' pairs of active nodes with the
following properties:
\begin{enumerate}[(P1)]
  \item For every $t \le s$, $\max\Set{ \Delta(u_t), \Delta(v_t)} \le \frac{\eps}{16} \max\Set{
        \dsq(u_t, v_t), \dmin/2n}$.

  \item For all pairs $(i,j) \in \V$ with $||p_i q_j|| \le
      \dmax$, there exists a unique pair $(u_t, v_t)$ such that $p_i \in P_{u_t}$ and $q_j \in
      Q_{v_t}$.
\end{enumerate}
Intuitively, $(u,v)$ is well-separated when for any $p \in u$ and $q \in v$, we have $||pq|| \approx \dsq(u, v)$.  
Then, for each pair $(u_t, v_t) \in \family$, we construct a small number of rectangles.

\paragraph*{Constructing $\family$.} 
The properties (P1) and (P2) are similar to those for the
so-called \emph{well-separated pair decomposition} (WSPD) of a point set, introduced by
Callahan and Kosaraju \cite{CK1995} (see also Har-Peled \cite{HP-book}).  We therefore adapt their
algorithm.  We first describe a recursive procedure \pair$(u,v)$, where $u,v$ are two
active nodes, which generates a family of pairs for $P_u, Q_v$.
\begin{center}
    \begin{tabular}{l}
        \hline
        \pair$(u,v)$ \\
        \textbf{if} $\max\Set{\Delta(u), \Delta(v)} \le \frac{\eps}{16} \max\Set{\dsq(u,v), \dmin/2n}$ \\
        ~ ~ add $(u,v)$ to $\family$; return \\
        \textbf{if} $\Delta(u) \ge \Delta(v)$, \textbf{then} \\
        ~ ~ $\forall w \in \child(u)$ ~ ~ if $P_w \neq \varnothing$, do \pair$(w,v)$ \\
        ~ ~ \textbf{else} $\forall z \in \child(v)$ ~ ~ if $Q_z \neq \varnothing$, do \pair$(u,z)$ \\
        \hline
    \end{tabular}
\end{center}
Let $u_0$ be a top-level active node with $\Delta(u_0) = \rmax$ and $P_{u_0} \neq
\varnothing$.  We call \pair$(u_0, v_0)$ if $Q_{v_0} \neq \varnothing$ and either
$v_0 = u_0$ or $v_0$ is a neighboring node of $u_0$.

(P1) is obviously true by the termination condition of the \pair~procedure.  
(P2) is true because for each $(i,j) \in \V$ with $||p_i q_j|| \le \dmax$, it must be that 
$p_i$ and $q_j$ are contained in either the same active node or two neighboring active
nodes of side length $\rmax$.  The stopping criterion ensures that the \pair~procedure 
never visits a node $v$ with $\Delta(v) < \rmin$.

By adapting the analysis of the WSPD algorithm, the following lemma
can be proven (see the appendix for the proof).

\begin{lemma}
    If $(u,v) \in \family$, then
        (i) $\max\Set{\Delta(u), \Delta(v)} \le \min\Set{ \Delta(\parent(u)),
            \Delta(\parent(v))}$; (ii)
        $\Delta(u)/2 \le \Delta(v) \le 2\Delta(u)$; and
        (iii) there is a constant $c \ge 0$ such that $\dsq(u,v) \le \frac{c}{\eps}
            \Delta(u)$.
    \label{lemma:wspd}
\end{lemma}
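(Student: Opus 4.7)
The plan is to prove all three properties by analyzing the recursion tree of \pair, with the central ingredient being a strengthened inductive invariant stating that throughout the recursion the two active nodes being considered have side lengths within a factor of two of each other. Since every pair in $\family$ arises from such a recursive call, properties (i) and (ii) will drop out immediately from the invariant, while (iii) will require looking one step up the recursion to the parent call that failed the termination test.

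For (ii), I would prove by induction on recursion depth that every call $\pair(u,v)$ made by the algorithm satisfies $\Delta(u)/2 \le \Delta(v) \le 2\Delta(u)$. The base case is the initial call, where $\Delta(u_0) = \Delta(v_0) = \rmax$ so the invariant holds trivially. For the inductive step, the algorithm splits whichever of $u'$ and $v'$ has the larger box; say WLOG it splits $u'$ into a child $w$ with $\Delta(w) = \Delta(u')/2$. Then the inductive hypothesis combined with $\Delta(u') \ge \Delta(v')$ gives $\Delta(v')/2 \le \Delta(w) \le \Delta(v')$, preserving the invariant for the recursive call $\pair(w, v')$.

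Property (i) then follows from (ii) and the quadtree identity $\Delta(\parent(u)) = 2\Delta(u)$: by (ii), $\Delta(v) \le 2\Delta(u) = \Delta(\parent(u))$ and symmetrically $\Delta(u) \le \Delta(\parent(v))$; combined with the trivial $\Delta(u) \le \Delta(\parent(u))$ and $\Delta(v) \le \Delta(\parent(v))$, I obtain $\max\{\Delta(u),\Delta(v)\} \le \min\{\Delta(\parent(u)), \Delta(\parent(v))\}$.

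For (iii), I would look at the parent recursive call $\pair(u',v')$ that spawned the call responsible for adding $(u,v)$ to $\family$; the degenerate case $(u,v)=(u_0,v_0)$ is handled separately, as $\sq_{u_0}$ and $\sq_{v_0}$ either coincide or share a facet, making $\dsq(u_0,v_0) = 0$. Assuming WLOG $u$ is a child of $u'$ and $v = v'$, so $\Delta(u') = 2\Delta(u)$, the failure of the termination condition at $\pair(u',v')$ yields $\max\{\Delta(u'),\Delta(v')\} = \Delta(u') > (\eps/16) \max\{\dsq(u',v'), \dmin/(2n)\}$. I would then combine this with the elementary geometric inequality $\dsq(u,v) \le \dsq(u',v') + \sqrt{d}\,\Delta(u')$ (obtained by shifting the closest pair witnessing $\dsq(u',v')$ from $\sq_{u'}$ into $\sq_u$). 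The main subtlety is that the non-termination bound bifurcates depending on whether $\dsq(u',v')$ or $\dmin/(2n)$ is larger: in the first branch it gives $\dsq(u',v') < (32/\eps)\Delta(u)$ directly, while in the second branch it only yields a lower bound on $\Delta(u')$ that I would rearrange into $\dsq(u',v') < \dmin/(2n) < (32/\eps)\Delta(u)$. Both branches collapse into the same linear-in-$1/\eps$ bound, producing $\dsq(u,v) \le (c/\eps)\Delta(u)$ for a constant $c$ depending only on $d$.
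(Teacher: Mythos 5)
Your proposal is correct and follows essentially the same route as the paper: properties (i) and (ii) come from the fact that \pair\ always recurses on the node with the larger box, and (iii) comes from the failed termination test at the parent call combined with the triangle inequality $\dsq(u,v) \le \dsq(\parent(u),v) + O(\Delta(u))$ and part (ii). Your phrasing of (i)--(ii) as an inductive invariant on the recursion (rather than the paper's proof by contradiction) and your explicit case analysis of the $\max\{\dsq,\dmin/2n\}$ branch are slightly more careful than the paper's write-up, but they are not a different argument.
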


\paragraph*{Constructing $\rects$.}  
We describe how to construct rectangles from each
well-separated pair $(u,v) \in \family$.  Let $\hsq$ denote the concentric box to
$\sq$ with twice the side length of $\sq$.  The algorithm starts from the first unvisited point of
$P_u$ and walks along $P$ until $P$ exits $\hsq_u$; it then repeats this walk
by jumping to the next point of $P_u$; this process stops when 
all points of $P_u$ have been visited.  Each walk corresponds to a maximal 
contiguous subsequence (MCS) of $P$ in $\hsq_u$ with the first point inside $\sq_u$. 
Let $\Sub_u(P) = \{[x_1^-:x_1^+], \ldots, [x_{\alpha_u}^-:x_{\alpha_u}^+]\}$
denote the MCSs as constructed above.  
Similarly, we compute $\Sub_v(Q) = \{[y_1^-:y_1^+], \ldots,
  [y_{\beta_v}^-:y_{\beta_v}^+]\}$ denoting 
the MCSs of $Q$ in $\hsq_v$.
For every pair $a \in [\alpha_u], b \in [\beta_v]$, we define the
rectangle $\rect_{ab} = [x_a^{-}:x_a^{+}] \times [y_b^{-}:y_b^{+}]$ and set its
weight $\wei_{\rect_{ab}} = \dsq(u,v)$.  Set $\rects_{uv} = \{\rect_{ab} \mid a \in
[\alpha_u], b \in [\beta_v]\}$ and $\rects = \bigcup_{(u,v) \in \family} \rects_{uv}$.
See Figure~\ref{fig:contig_subseq} for an illustration of the MCSs of $\Sub_u(P),
\Sub_v(Q)$, and the rectangles in $\rects_{uv}$. 

\begin{figure}
  \centering
  \includegraphics[width=0.8\textwidth]{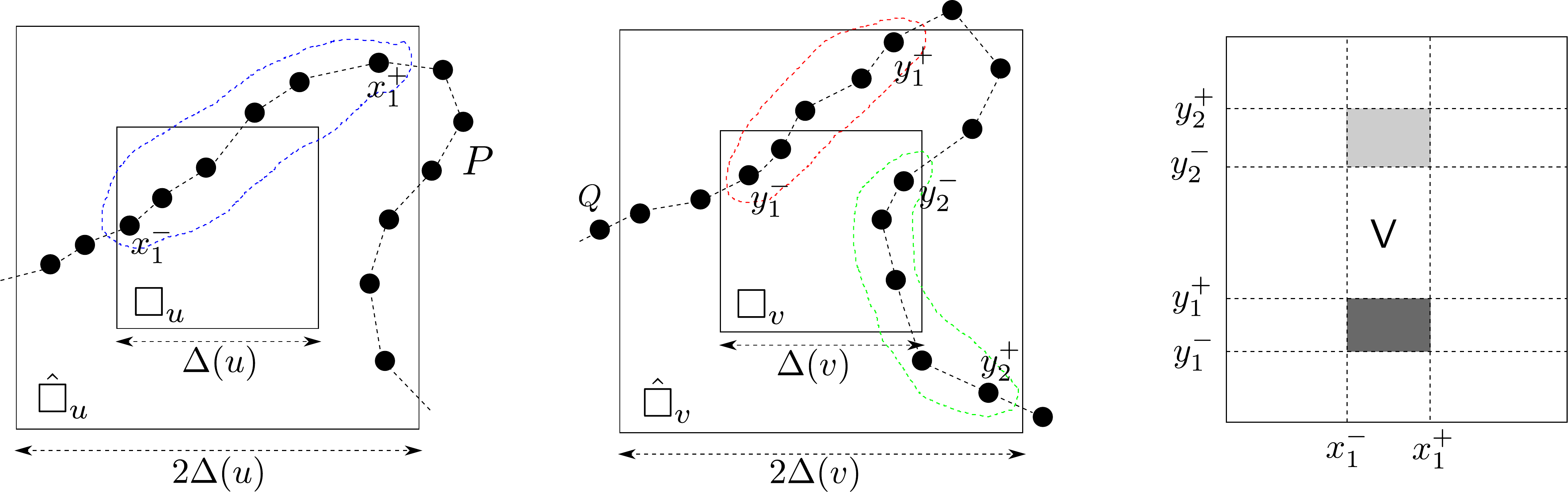}
  \caption{One MCS in $\hsq_u$ (left) and two MCSs in $\hsq_v$ (middle).  Together, two
    rectangles $\rect_{11}, \rect_{12}$ (right) are created (shaded areas).}
  \label{fig:contig_subseq}
\end{figure}

\emph{Remark.} The rectangles in $\rects_{uv}$ cover all the grid points corresponding to
$P_u \times Q_v$, i.e., if $(p_i, q_j) \in P_u \times Q_v$ then $(i,j) \in \bigcup
\rects_{uv}$.  Since $\bigcup \rects_{uv}$ may also contain grid points that correspond to
pairs in $(P \cap (\hsq_u \setminus \sq_u)) \times (Q \cap (\hsq_v \setminus \sq_v))$, a
grid point may lie in multiple rectangles of $\rects$, implying that the rectangles in
$\rects$ may overlap.  Had we defined $\Sub_u(P), \Sub_v(Q)$ to be MCSs of $P_u$ and
$Q_v$ respectively, the rectangles would have been disjoint, but we might have
ended up creating $\Omega(n^2)$ rectangles in the worst case.  As we will prove in
Section~\ref{sec:run_time}, by allowing $\rects_{uv}$ to cover extra points, we keep the
size of $\rects$ and $\boundaries$ small.

We show that the set of rectangles $\rects$ satisfies the conditions in step
(ii) of the algorithm.

\begin{lemma}
  $\rects$ satisfies the following properties:
  \begin{enumerate}[(i)]
    \item For all $\rect \in \rects$ and for all $(i,j) \in \rect$, 
      $|\wei(i,j) - \wei_{\rect}| \le \frac{\eps}{2} \max\{ \wei_{\rect}, \dmin/2n\}$.

    \item If $(i,j) \in \V$ and $\wei(i,j) \le \dmax$, then there exists a rectangle $\rect \in
      \rects$ such that $(i,j) \in \rect$. 
  \end{enumerate}
  \label{lemma:rectangle_property}
\end{lemma}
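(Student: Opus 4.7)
The plan is to prove the two parts of the lemma by directly unpacking the construction of $\rects$. For (i), I would fix a rectangle $\rect = \rect_{ab}$ that was produced from some well-separated pair $(u,v) \in \family$, and an arbitrary grid point $(i,j) \in \rect_{ab}$. By the definition of $\Sub_u(P)$ and $\Sub_v(Q)$, the points $p_i$ and $q_j$ lie in $\hsq_u$ and $\hsq_v$, respectively. The key geometric observation is that since $\hsq_u$ is concentric to $\sq_u$ with twice the side length, any point of $\hsq_u$ is within distance $O(\sqrt{d}\,\Delta(u))$ of any chosen point of $\sq_u$, and similarly for $v$.

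Picking a witness pair $(p^\ast, q^\ast) \in \sq_u \times \sq_v$ that realizes $\dsq(u,v)$ and applying the triangle inequality in both directions yields
\[
\bigl|\,\|p_i q_j\| - \dsq(u,v)\,\bigr| \;\le\; c'\sqrt{d}\cdot \max\{\Delta(u),\Delta(v)\}
\]
for an absolute constant $c'$. Since $\wei_\rect = \dsq(u,v)$ and $(u,v)$ satisfies (P1), we have $\max\{\Delta(u),\Delta(v)\} \le \tfrac{\eps}{16}\max\{\dsq(u,v),\dmin/2n\}$, so the right-hand side is at most $\tfrac{\eps}{2}\max\{\wei_\rect,\dmin/2n\}$ once the constant in (P1) is chosen large enough relative to $c'\sqrt{d}$ (recall $d$ is fixed). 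This is exactly (i).

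For (ii), I would first invoke property (P2): for any $(i,j) \in \V$ with $\wei(i,j) = \|p_i q_j\| \le \dmax$, there is a unique pair $(u,v) \in \family$ with $p_i \in P_u$ and $q_j \in Q_v$. It then suffices to verify that every index in $P_u$ (respectively $Q_v$) is covered by some interval in $\Sub_u(P)$ (respectively $\Sub_v(Q)$). This is immediate from the MCS construction: the procedure repeatedly launches a new maximal walk from the next unvisited element of $P_u$ and halts only after every element of $P_u$ has been visited, so every $p_i \in P_u$ lies inside some interval $[x_a^-:x_a^+]$; the analogous statement holds for $q_j$ and some $[y_b^-:y_b^+]$. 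Consequently $(i,j) \in \rect_{ab} \in \rects_{uv} \subseteq \rects$.

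The only mildly delicate step is the bookkeeping of constants in (i): one must confirm that the slack introduced by $\hsq_u\setminus\sq_u$ together with the diameter of $\sq_u$ is absorbed by the well-separation factor in (P1), with the $\sqrt{d}$ and other absolute constants folded into the hidden constant. Part (ii) is essentially structural, reflecting the fact that the MCS procedure is designed precisely so that $P_u$ and $Q_v$ are fully covered by the subsequences that define the rows and columns of $\rects_{uv}$.
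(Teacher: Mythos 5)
Your proof is correct and follows essentially the same route as the paper: for (i), the triangle inequality through a closest pair realizing $\dsq(u,v)$ plus the diameters of $\hsq_u, \hsq_v$, followed by (P1); for (ii), property (P2) combined with the fact that the MCS intervals cover all of $P_u$ and $Q_v$. Your observation that the $\eps/16$ constant in (P1) must absorb a $\sqrt{d}$ factor is a fair and slightly more careful accounting than the paper's explicit $2\sqrt{2}$ bound, which is tailored to small $d$.
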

\begin{proof}
  (i) Suppose $\rect$ is constructed from some well-separated pair $(u_t, v_t) \in \family$.
  By construction, if $(i,j) \in \rect$, then $p_i \in \hsq_{u_t}$ and $q_j \in
  \hsq_{v_t}$.  Therefore, $\wei(i,j) = ||p_i q_j|| \le \dsq(u_t, v_t) + 2\sqrt{2}
  (\Delta(u_t) + \Delta(v_t))$.  By property (P1) and $\wei_{\rect} = \dsq(u_t, v_t)$, 
  we have $\wei(i,j) \le \wei_{\rect} + 4\sqrt{2} \frac{\eps}{16} \max \{\wei_{\rect},
\dmin/2n \} \le \wei_{\rect} + \frac{\eps}{2} \max \{\wei_{\rect}, \dmin/2n \}$.
Similarly, we can prove $\wei(i,j) \ge \wei_{\rect} - \frac{\eps}{2} \max\{\wei_{\rect}, \dmin/2n
  \}$.  (ii) By property (P2), there must exist a pair $(u, v) \in \family$
  such that $p_i \in P_{u}$ and $q_j \in Q_{v}$.  Since $\bigcup \rects_{uv}$ ``covers''
  $P_u \times Q_v$, there is a rectangle $\rect \in \rects_{uv}$ that contains the grid
  point $(i,j)$.
\end{proof}

The time taken to construct the set $\rects$ is $O(|\rects|)$ plus the time taken to
generate $\family$.  We bound the latter quantity in Section~\ref{sec:run_time}.

\subsection{Computing admissible paths} \label{sec:fill_table}

We now describe an algorithm that for each $(i,j) \in \boundaries$ computes
$(1+\eps)$-approximate value $\awdtw(i,j)$ of $\wdtw(i,j)$ in amortized constant time.

We say a point $(i,j) \in \V$ \emph{hits} a rectangle $\rect = [i_1:i_2] \times [j_1:j_2]$ 
if $i_1 < i \le i_2$ and $j_1 < j \le j_2$, i.e., $(i,j) \in \rect$ but not on its left or
bottom boundary. 
The algorithm sets~$\awdtw(1,1) = \wei(1,1)$,
and processes the points of $\boundaries$ from bottom to top and from left to
right in a row.  Suppose the current point is $(i,j)$.  There are two cases:
\begin{enumerate}[(i)]
  \item If $(i,j)$ does not hit any rectangle in $\rects$, we set
    \begin{equation}
      \awdtw(i,j) = \min\{ \awdtw(i-1, j), \awdtw(i, j-1), \awdtw(i-1, j-1) \} + \wei(i,j),
      \label{eq:compute_dtw1}
    \end{equation}
    where $\awdtw(a, b) = \infty$ if $(a,b) \notin \boundaries$. 

  \item  Let $\rect =
    [x^-:x^+] \times [y^-:y^+]$ be a rectangle hit by $(i,j)$.  We compute $\awdtw(i,j)$ using
    the computed $\awdtw$ values of the points on the left and bottom boundaries of
    $\rect$ that are dominated by $(i,j)$, and setting the weight of all points in
    $\rect$ to be $\wei_{\rect}$.  So the weight of a path inside $\rect$ is proportional to its length.  

    Without loss of generality, assume $i-x^- \ge j - y^-$. We divide the left and bottom
    boundary points that are dominated by $(i,j)$ into three sets: 
    $N_1 = \{(x^-, y) \mid y \in [y^-:j] \}$, $N_2 = \{(x, y^-) \mid x \in
    [x^-:i-(j-y^-)-1] \}$, and $N_3 = \{(x, y^-) \mid x \in [i-(j-y^-):i] \}$.  See
    Figure~\ref{fig:rectangles}(b).

    The optimal admissible path from $(1,1)$ to $(i,j)$ must pass through a
    point in $N_1 \cup N_2 \cup N_3$.  
    So we compute $\awdtw(i,j)$ as follows:
    \begin{equation}
      \awdtw(i,j) =  \min \left\{ 
        \begin{array}{ll} 
          (i-x^-)\wei_{\rect} &+ \min_{(a,b) \in N_1} \awdtw(a, b), \\ 
          i \wei_{\rect} &+ \min_{(a,b) \in N_2} (\awdtw(a,b) - a \wei_{\rect}), \\
          (j-y^-) \wei_{\rect} &+ \min_{(a,b) \in N_3} \awdtw(a,b) 
        \end{array} \right\}.
      \label{eq:compute_dtw2}
    \end{equation}
\end{enumerate}

The following lemma proves that our algorithm returns a $(1+\eps)$-approximation of $\dtw(P,Q)$.  
\begin{lemma}
  For each $(i,j) \in \boundaries$, if $\wdtw(i,j) \le \dmax$, then 
  $$|\awdtw(i,j) - \wdtw(i,j)| \le \frac{\eps}{2} (\wdtw(i,j) + (i+j)\dmin/2n ).$$
  \label{lemma:correctness}
\end{lemma}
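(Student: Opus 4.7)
The plan is to prove the bound by strong induction on the order in which the algorithm processes the points of~$\boundaries$ (bottom-to-top, left-to-right within a row). The base case $(i,j) = (1,1)$ is immediate, since $\awdtw(1,1) = \wei(1,1) = \wdtw(1,1)$ and the right-hand side is nonnegative.

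For the inductive step I would split into the two algorithmic cases. Case~(i), where $(i,j)$ does not hit any rectangle, is handled by first observing that since $(i,j) \in \boundaries$ it lies on the boundary of some $\rect \in \rects$, and consequently at least one of the predecessors $(i-1,j), (i,j-1), (i-1,j-1)$ also lies on $\partial \rect$ and hence in $\boundaries$; applying the induction hypothesis at that predecessor together with Lemma~\ref{lemma:rectangle_property}(i) to compare $\wei(i,j)$ with $\wei_{\rect}$ yields the bound after accounting for a single extra increment in~$i+j$. The main case is~(ii), where $(i,j)$ hits a rectangle $\rect = [x^-\!:\!x^+]\times[y^-\!:\!y^+]$. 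For the upper bound, I would fix an optimal admissible path $\pi^*$ from $(1,1)$ to $(i,j)$ and let $(a,b)$ be its last point on $\partial\rect$ prior to reaching $(i,j)$. Monotonicity of $\pi^*$ forces $(a,b)\in N_1\cup N_2\cup N_3$, and optimal substructure gives $\wdtw(i,j) = \wdtw(a,b) + W^*$, where $W^*$ is the actual weight of the sub-path of $\pi^*$ from just after $(a,b)$ to $(i,j)$. Every grid point contributing to $W^*$ lies inside $\rect$, so Lemma~\ref{lemma:rectangle_property}(i) yields $|W^* - L\wei_{\rect}| \le L\frac{\eps}{2}\max\{\wei_{\rect},\dmin/2n\}$, where $L$ is the length quantity ($i-x^-$, $i-a$, or $j-y^-$) used by the matching branch of~(\ref{eq:compute_dtw2}); plugging that branch into $\awdtw(i,j)$ and combining with the induction hypothesis at $(a,b)$ bounds $\awdtw(i,j) - \wdtw(i,j)$ from above. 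The lower bound is dual: the algorithm's minimum is realized by some entry point $(a',b')$ together with an idealized monotone sub-path of some length $L'$ through $\rect$. Concatenating a near-optimal path to $(a',b')$ (witnessed by $\awdtw(a',b')$) with this sub-path produces a real admissible $(1,1)\!\to\!(i,j)$ path of actual weight at most $\awdtw(a',b') + L'\wei_{\rect} + L'\frac{\eps}{2}\max\{\wei_{\rect},\dmin/2n\}$, which upper-bounds $\wdtw(i,j)$ and can then be rearranged using the induction hypothesis.

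The main technical obstacle will be the error bookkeeping needed to collapse the inherited error $\frac{\eps}{2}(\wdtw(a,b) + (a+b)\dmin/2n)$ plus the rectangle-traversal error $L\frac{\eps}{2}\max\{\wei_{\rect},\dmin/2n\}$ into the target bound $\frac{\eps}{2}(\wdtw(i,j) + (i+j)\dmin/2n)$. Two inequalities carry the argument: first, $L \le (i-a)+(j-b) \le (i+j)-(a+b)$, because a monotone sub-path of $L$ grid points increases $i+j$ by at least $L$, so the $\dmin/2n$ component of the per-step error absorbs neatly into the $(i+j)\dmin/2n$ budget; second, $L\wei_{\rect} \le W^* + L\frac{\eps}{2}\max\{\wei_{\rect},\dmin/2n\}$ follows from Lemma~\ref{lemma:rectangle_property}(i) applied to each grid point on the $\pi^*$-subpath inside $\rect$, so the $\wei_{\rect}$ component of the per-step error telescopes into $\frac{\eps}{2}\wdtw(i,j)$ as errors accumulate across successive rectangles visited by $\pi^*$. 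Separating $\max\{\wei_{\rect},\dmin/2n\}$ into the sum $\wei_{\rect} + \dmin/2n$ throughout keeps these two budgets independent and matches the two terms on the right-hand side of the claim.
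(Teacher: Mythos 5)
Your overall strategy---induction on the processing order, the case split on whether $(i,j)$ hits a rectangle, locating the optimal path's crossing point in $N_1\cup N_2\cup N_3$, and the two bookkeeping inequalities $L \le (i+j)-(a+b)$ and $L\wei_{\rect} \le W^* + L\tfrac{\eps}{2}\max\{\wei_{\rect},\dmin/2n\}$---is exactly the paper's argument, and your accounting in the main case (ii) is, if anything, more explicit than the paper's own.

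However, your case (i) has a genuine gap. You argue that since $(i,j)\in\boundaries$ it lies on $\partial\rect$ for some $\rect\in\rects$, and ``consequently'' one of its three predecessors also lies on $\partial\rect$. This is false in general: if $(i,j)$ is, say, the bottom-left corner of the only rectangle containing it, none of $(i-1,j),(i,j-1),(i-1,j-1)$ belongs to that rectangle at all. More importantly, even when some predecessor does happen to lie in $\boundaries$, applying the induction hypothesis at an \emph{arbitrary} such predecessor does not give the claim: recurrence~(\ref{eq:compute_dtw1}) takes a minimum over predecessors in $\boundaries$, and to bound $\awdtw(i,j)$ from above by $\wdtw(i,j)$ plus error you need the specific predecessor $(a,b)$ that lies on the optimal admissible path (so that $\wdtw(a,b) = \wdtw(i,j) - \wei(i,j)$) to appear in that minimum; for any other predecessor $(a',b')$ one only knows $\wdtw(a',b') + \wei(i,j) \ge \wdtw(i,j)$, which is the wrong direction. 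The correct argument, which the paper uses, is that $\wei(a,b) \le \wdtw(a,b) \le \wdtw(i,j) \le \dmax$, so Lemma~\ref{lemma:rectangle_property}(ii) places $(a,b)$ in some rectangle $\rect'$; if $(a,b)$ were interior to $\rect'$ then $(i,j)$ would hit $\rect'$, contradicting the case hypothesis, so $(a,b)$ is a boundary point of $\rect'$ and hence in $\boundaries$. Note also that recurrence~(\ref{eq:compute_dtw1}) charges the \emph{exact} weight $\wei(i,j)$, so your appeal to Lemma~\ref{lemma:rectangle_property}(i) and the ``extra increment'' of error are unnecessary here: the error at $(i,j)$ equals the error at $(a,b)$ and is absorbed because $\wdtw(a,b)\le\wdtw(i,j)$ and $a+b < i+j$. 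A smaller caution on your lower-bound sketch in case (ii): $\awdtw(a',b')$ may \emph{under}estimate $\wdtw(a',b')$, so it does not witness a real path of that weight; the clean route is to lower-bound each candidate value $\awdtw(a,b)+M\wei_{\rect}$ by $(1-\tfrac{\eps}{2})$ times the true weight of the corresponding real path, which is at least $\wdtw(i,j)$.
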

\begin{proof}
  By induction on the order in which the $\awdtw$ values of points in $\boundaries$ are
  computed, we prove $\awdtw(i,j) - \wdtw(i,j) \le \frac{\eps}{2} (\wdtw(i,j) + (i+j)\dmin/2n
  )$.
  The lemma is obviously true for $(1,1)$.  Assume it is true for all points of
  $\boundaries$ processed before $(i,j)$.  We prove that it is also true for $(i,j)$.

  If $(i,j)$ does not hit any rectangle in $\rects$, then $\awdtw(i,j)$ is computed 
  using~(\ref{eq:compute_dtw1}).  Let $(a,b) \in \{ (i-1,j), (i, j-1), (i-1,j-1) \}$ be the
  predecessor of $(i,j)$ in the optimal admissible path from $(1,1)$ to $(i,j)$.
  Then $\wdtw(i,j) = \wdtw(a,b) + \wei(i,j)$.  Since $\wdtw(a,b) \le \wdtw(i,j) \le
  \dmax$, there is a rectangle $\rect$ containing $(a,b)$.
  Since~$(i,j)$ does not hit any rectangle, $(a,b)$ must actually lie on the boundary
  of~$\rect$, and thus in $\boundaries$.  So by induction hypothesis, 
  $$\awdtw(i,j) - \wdtw(i,j) = \awdtw(a,b) - \wdtw(a,b) \le \frac{\eps}{2} (\wdtw(a,b)+
  (a+b)\dmin/2n) \le \frac{\eps}{2} (\wdtw(i,j)+ (i+j)\dmin/2n).$$

  In the second case, let $\rect \in \rects$ be the rectangle hit by $(i,j)$ and used to
  compute $\awdtw(i,j)$.  
  Let $(a,b)$ be the intersection of the optimal admissible path from $(1,1)$ to $(i,j)$
  and the boundary of $\rect$.  
  Then 
  \begin{align*}
    \awdtw(i,j) &\le \awdtw(a,b) + \max\{i-a, j-b\} \wei_{\rect} \\
    &\le \wdtw(a,b) + \frac{\eps}{2} (\wdtw(a,b)+ (a+b) \dmin/2n ) + \max\{i-a, j-b\} \wei_{\rect}\\
    & \le \wdtw(i,j) + \frac{\eps}{2} ( \wdtw(i,j)+ (i+j) \dmin/2n ).
  \end{align*}
  The last inequality satisfies because $\wei_{\rect} \le \wei(h,k) + \frac{\eps}{2} \max\{\wei(h,k),
  \dmin/2n\}$ for any $(h,k) \in \rect$ by Lemma~\ref{lemma:rectangle_property}.
  Similarly, we can prove that $\wdtw(i,j) - \awdtw(i,j) \le \frac{\eps}{2} (\wdtw(i,j)+ (i+j)\dmin/2n
  )$, and the lemma follows.
\end{proof}

\begin{corollary}
  $|\awdtw(m,n) - \dtw(P,Q)| \le \eps \dtw(P,Q)$.
\end{corollary}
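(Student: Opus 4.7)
The plan is to simply apply Lemma~\ref{lemma:correctness} to the terminal grid point $(m,n)$, after checking that the hypotheses are met, and then absorb the additive slack into a multiple of $\dtw(P,Q)$.

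First I would verify that $(m,n)$ lies in $\boundaries$, so that $\awdtw(m,n)$ is indeed computed by our procedure. By property (P2) of $\family$ and the construction of $\rects$ in Section~\ref{sec:compute_rect_partition}, since $\wei(m,n) = \|p_m q_n\| \le \dtw(P,Q) \le \dmax$, there is a rectangle $\rect \in \rects$ containing $(m,n)$; as $(m,n)$ is the top-right corner of~$\V$, it must lie on the top or right boundary of~$\rect$, hence $(m,n) \in \boundaries$.

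Next, the hypothesis $\wdtw(m,n) \le \dmax$ of Lemma~\ref{lemma:correctness} is satisfied: by Section~\ref{sec:alg_step1}, $\dmax = 4n\dmin \ge \dtw(P,Q) = \wdtw(m,n)$. Applying Lemma~\ref{lemma:correctness} therefore gives
\[
|\awdtw(m,n) - \dtw(P,Q)| \;\le\; \frac{\eps}{2}\Bigl(\dtw(P,Q) + (m+n)\cdot\frac{\dmin}{2n}\Bigr).
\]
Since $m \le n$, we have $(m+n)/(2n) \le 1$, so $(m+n)\dmin/2n \le \dmin$. Combined with the lower bound $\dmin \le \dtw(P,Q)$ from Lemma~\ref{lemma:n_approx} (via the choice of $\dmin$ in Section~\ref{sec:alg_step1}), the additive term is bounded by $\dtw(P,Q)$, yielding
\[
|\awdtw(m,n) - \dtw(P,Q)| \;\le\; \frac{\eps}{2}\bigl(\dtw(P,Q) + \dtw(P,Q)\bigr) \;=\; \eps \cdot \dtw(P,Q).
\]
There is no real obstacle here; the corollary is essentially the specialization of Lemma~\ref{lemma:correctness} to $(i,j)=(m,n)$, and the only thing to check is that the additive $(i+j)\dmin/2n$ slack designed into the induction does not exceed $\dtw(P,Q)$ at the terminal point, which follows immediately from $m+n \le 2n$ and the initial approximation guarantee $\dmin \le \dtw(P,Q)$.
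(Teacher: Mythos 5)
Your proof is correct and is exactly the intended argument: the paper states this corollary without proof precisely because it is the specialization of Lemma~\ref{lemma:correctness} to $(i,j)=(m,n)$, using $\wdtw(m,n)=\dtw(P,Q)\le\dmax$, $(m+n)\dmin/2n\le\dmin\le\dtw(P,Q)$, and the fact that $(m,n)$, being the extreme corner of $\V$, necessarily lies on the boundary of any rectangle containing it. All the hypotheses you check are the right ones and your bounds are valid.
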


We now describe how the algorithm for computing $\awdtw(i,j)$ is implemented efficiently.

\paragraph*{Sorting points in $\boundaries$.}
Using radix sort, we sort the points of $\boundaries$ in $(y, x)$ lexicographical
order, where $x$ and $y$ denote the first and second coordinates of points, so that they
are sorted in the order in which they are processed.  We also perform
the same radix sort for $(x,y)$ and $(y-x,x)$ lexicographical orderings.  For each
point in $\boundaries$, we add a pointer to the previous point in each of the three
sorted orders, namely, a pointer to the first point below, to the left of, and on the
lower-left diagonal of the current point.  These pointers are used to identify
the $\awdtw$ values required in~(\ref{eq:compute_dtw1}). 

\paragraph*{Finding a rectangle hit by a point.}
The algorithm also needs to determine whether there exists a rectangle of $\rects$ hit by $(i,j)$.  This can be
achieved by maintaining the rectangle with the right-most right boundary when we
traverse each row.  More precisely, when processing the point $(i,j) \in \boundaries$, we
maintain a rectangle $\rect_{curr}$ that is hit by $(i,j)$ and whose right boundary spans
the farthest; we denote the $x$-coordinate of the right boundary of $\rect_{curr}$ by $\xi_{curr}$.  If no
rectangle hits $(i,j)$, we set $\rect_{curr} = \mathrm{NULL}$.  We update $\rect_{curr},
\xi_{curr}$ while processing $(i,j)$ as follows: If $(i,j)$ is the left boundary point of
a rectangle $\rect$ with $\xi$ being the $x$-coordinate of its right boundary and if $\xi >
\xi_{curr}$, we set $\rect_{curr} = \rect$ and $\xi_{curr} = \xi$.  Otherwise, if
$(i,j)$ is the right boundary point of $\rect_{curr}$, i.e., $i = \xi_{curr}$, we set
$\rect_{curr} = \xi_{curr} = \mathrm{NULL}$.  The total time spent at $(i,j)$ is
$O(1)$.

\paragraph*{Range-min data structure.}
If~$(i,j)$ hits the rectangle $\rect_{curr}$, we compute the minimum $\awdtw$ value in the
interval $N_1 = [y^-:j]$ on the left boundary of $\rect_{curr}$ and in the intervals
$N_2 = [x^-:i-(j-y^-)-1], N_3 = [i-(j-y^-):i]$ on the bottom boundary of $\rect_{curr}$, by performing range-min
queries.  We use the
data structure proposed by Fischer and Heun \cite{FH06} (see also \cite{RMQ}), which
answers a range-min query in $O(1)$ time after linear time preprocessing.  Thus, a range-min
query on $N_2$ or $N_3$ can be answered in $O(1)$ time by constructing a static range-min data
structure on the points on the bottom boundary of $\rect_{curr}$ (all $\awdtw$ values for these
points have been computed before visiting any point that hit $\rect_{curr}$).  On the other hand,
to support range-min query on $N_1$, we need a range-min data structure on the left 
boundary points of $\rect_{curr}$ that also supports inserting new points at the end when the $\awdtw$ values of
the left boundary points are computed row by row.

We briefly describe their static data structure, and show how to extend it
to support insertion in amortized $O(1)$ time.  
The input is an array of $k$ real numbers.  We say a data structure
has time complexity $\langle p(k), q(k) \rangle$ if the preprocessing takes time 
$O(p(k))$ and each query takes time $O(q(k))$.  The static data
structure divides the array into blocks of size $b = \frac{1}{4}\log_2 k$.  For each block, we
construct a naive $\langle b^2, 1\rangle$-time data structure.  But Fischer and Heun show
that many blocks can share the same data structure, so we create only much fewer than
$k/b = O(k/\log k)$ copies of the data structure.  Next the algorithm 
computes the minimum for each block and uses an $\langle k \log k, 1 \rangle$-time
``exponential-range'' data structure over the block minimums.  We now describe each of the
two structures in more detail.  

A \emph{Cartesian tree} of a
length-$b$ array stores the minimum element of the array at the root and the left
(resp.~right) child of the root is a Cartesian tree on the elements to the left
(resp.~right) of the minimum element.  It can be built by a linear scan of the elements and pushing/popping
elements into/from a stack at most $2b$ times; these push/pop operations serve as a fingerprint of the
Cartesian tree.  Thus, the number of different Cartisian trees for a
length-$b$ array is bounded by $2^{2b} = 4^b$.  It turns out that all
arrays of length $b$ that have the same structured \emph{Cartesian tree} \cite{CART} can share the 
same $\langle b^2, 1 \rangle$-time data structure.  We thus build $4^b$ copies of the
$\langle b^2,1\rangle$ data structure as follows: We go through each of the $O(k/\log
k)$ blocks, and compute the fingerprint of the block in $O(b)$ time; if there is no 
data structure corresponding to the fingerprint, we build it in $O(b^2)$ time by computing
the minimums for all possible $O(b^2)$ ranges.  
The exponential-range data structure maintains the minimums of $O(\log k)$ ranges starting
at each index $i \in [k]$ of exponentially increasing sizes $1, 2, 2^2, \ldots, 2^{\log
k}$.  Then the minimum of a range $[i,j]$ can be obtained by taking the minimum of two
ranges $[i, i+2^{\alpha}-1]$ and $[j-2^{\alpha}+1, j]$, where $\alpha$ is the largest
integer such that $2^{\alpha} \le j-i+1$.  The total
preprocessing time is $O( (k/b) \log (k/b) + 4^b b^2) = O(k)$.  

To answer a range-min query, we compute
the blocks containing the two end points of the query range; the minimum of the whole
blocks in the range can be answered using the exponential-range data structure in
$O(1)$ time; the minimums of the two partial blocks can also be answered in $O(1)$ time using
the naive data structures associated with the two boundary blocks.  So each query takes
$O(1)$ time.

We now describe how to support inserting an element to the end of the array in
amortized constant time.  If the last block of the array contains less than $b$
elements, the exponential-range data structure remains the same, and we just need to
update the fingerprint of the last block.  We can encode the fingerprint information (a
sequence of pushes and pops) as a path from the root to an internal node in a full binary 
tree of depth $2b$, where a push corresponds
to branching to the left child and a pop corresponds to branching to the right child.  
At each node of the binary
tree, we store a pointer to the corresponding naive range-min data structure.  Recall that the
Cartesian tree is built by a linear scan of the elements; so inserting a new element just
means going one more level down the binary tree, which takes constant time.  On the other
hand, when the last block is already full, the newly inserted element starts a new block.
In this case, we also need to update the exponential-range data structure, which takes
$O(\log (k/b))$ time; but since this only happens every $O(b) = O(\log k)$ elements, the
amortized time per insersion is still constant.  Therefore, we can insert an element to
the data structure in amortized $O(1)$ time.

\begin{lemma}
  For all $(i,j) \in \boundaries$, $\awdtw(i,j)$ can be computed in a total time of
  $O(|\boundaries|)$.
  \label{lemma:run_time_step3}
\end{lemma}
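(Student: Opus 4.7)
The plan is to show that each per-point step of the algorithm of Section~\ref{sec:fill_table} runs in amortized $O(1)$ time and that all auxiliary preprocessing also fits in $O(|\boundaries|)$ total, whereupon summation over $\boundaries$ yields the bound.

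First, I would radix-sort $\boundaries$ in the three lexicographic orders $(y,x)$, $(x,y)$, and $(y{-}x,x)$; since every coordinate is a bounded integer, each radix sort runs in $O(|\boundaries|)$ time. Consecutive-element pointers in each sorted list give $O(1)$ access to $\awdtw(i{-}1,j)$, $\awdtw(i,j{-}1)$, and $\awdtw(i{-}1,j{-}1)$ whenever these lie in $\boundaries$, so each evaluation of~(\ref{eq:compute_dtw1}) in case (i) takes $O(1)$ time after the base case $\awdtw(1,1)=\wei(1,1)$ is set. Sweeping each row left-to-right while maintaining $\rect_{curr}$ and $\xi_{curr}$ as described also identifies in $O(1)$ per point whether $(i,j)$ hits a rectangle and, if so, which rectangle to use in case (ii).

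For case (ii) I would equip each rectangle $\rect\in\rects$ with two range-min structures. The bottom-boundary structure is a static Fischer--Heun structure, built in $O(|\partial\rect|)$ time as soon as the $\awdtw$ values on the bottom row of $\rect$ have been computed; since the sweep proceeds in bottom-to-top, left-to-right order, this structure is ready before any $(i,j)$ in the interior of $\rect$ is processed. The left-boundary structure is the append-only Fischer--Heun augmentation described above, into which $\awdtw$ values on the left boundary of $\rect$ are inserted row by row at amortized $O(1)$ per insertion. Each of the three range-min lookups over $N_1$, $N_2$, $N_3$ defining~(\ref{eq:compute_dtw2}) then costs $O(1)$, so case (ii) also runs in amortized $O(1)$ per point.

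The main obstacle is to bound the total preprocessing cleanly. The two range-min structures for a single rectangle cost $O(|\partial\rect|)$ time and space, so the preprocessing across all rectangles is $O\!\left(\sum_{\rect\in\rects} |\partial\rect|\right)$. Since $\boundaries$ is defined as the union of rectangle boundaries, this sum equals $|\boundaries|$ under the natural convention that shared boundary slots are counted with multiplicity; alternatively, one argues from the WSPD-style structure of $\family$ that each grid point lies on only $O(1)$ rectangle boundaries, so the two quantities agree up to constants. Combining this $O(|\boundaries|)$ preprocessing with the $O(1)$ per-point cost for sorting, pointer lookup, sweep maintenance, and range-min querying yields the claimed total running time of $O(|\boundaries|)$.
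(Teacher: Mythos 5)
Your proposal matches the paper's argument essentially step for step: the three radix sorts with predecessor pointers for case~(i), the row sweep maintaining $\rect_{curr}$ and $\xi_{curr}$ to detect hits in $O(1)$, and the Fischer--Heun range-min structures (static on bottom boundaries, append-only with amortized $O(1)$ insertion on left boundaries) for the queries over $N_1, N_2, N_3$ in case~(ii). Your closing remark about charging preprocessing to $\sum_{\rect\in\rects}|\partial\rect|$ counted with multiplicity is the right reading, since that is exactly the quantity the paper bounds in its analysis of $|\boundaries|$.
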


\subsection{Running time analysis} \label{sec:run_time}
We now bound the size of $|\boundaries|$, which by Lemma~\ref{lemma:run_time_step3} will
bound the running time of step (iii) of the algorithm.  Similar argument will bound the
time spent in generating the set $\family$, which will bound the running time of step
(ii).

\begin{lemma}
  The number of points in $\boundaries$ is $O(\frac{\kappa}{\eps} n \log n)$,
  $O(\frac{\kappa^d}{\eps^d} n \log n)$, and $O(\frac{1}{\eps} n^{2-1/d} \log n)$ for
  $\kappa$-packed, $\kappa$-bounded and backbone sequences, respectively.
  \label{lemma:num_boundary_pts}
\end{lemma}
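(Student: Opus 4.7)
My plan is to bound $|\boundaries|$ by accounting for rectangle perimeters one well-separated pair at a time. Each rectangle $\rect_{ab} \in \rects_{uv}$ has boundary size $O((x_a^{+} - x_a^{-}) + (y_b^{+} - y_b^{-}))$; since the MCSs of $P$ inside $\hsq_u$ are disjoint index ranges, $\sum_a (x_a^{+} - x_a^{-}) \le \hm_u := |P \cap \hsq_u|$, and analogously $\sum_b (y_b^{+} - y_b^{-}) \le \hn_v := |Q \cap \hsq_v|$. Summing perimeters over the $\alpha_u \beta_v$ rectangles in $\rects_{uv}$ yields a contribution $O(\alpha_u \hn_v + \beta_v \hm_u + \alpha_u \beta_v)$ per pair, so
$$
|\boundaries| \;\le\; \sum_{(u,v) \in \family} O\bigl(\alpha_u \hn_v + \beta_v \hm_u + \alpha_u \beta_v\bigr).
$$

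The sum is controlled by two ingredients common to all three cases: (i) active nodes occupy $O(\log n)$ levels of $\T$ and each point of $P \cup Q$ lies in only $O(1)$ expanded cells $\hsq_u$ per level, so $\sum_u \hm_u = \sum_v \hn_v = O(n \log n)$; (ii) by Lemma~\ref{lemma:wspd}, every partner $v$ of $u$ in $\family$ sits at a scale $\Delta(v) = \Theta(\Delta(u))$ and inside the ball $B(\sq_u, O(\Delta(u)/\eps))$. The remaining task is to translate the geometric hypothesis of the given curve class into (a) a bound on the per-node MCS count $\alpha_u, \beta_v$ and (b) a bound on the number of active partner cells fitting inside that ball, after which a routine product-of-bounds estimate recovers the claim.

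For $\kappa$-packed sequences, between consecutive MCSs of $P$ in $\hsq_u$ the curve must exit $\hsq_u$ and reach back into $\sq_u$, contributing length at least $\Delta(u)/2$ to $\gamma_P \cap \hsq_u$; since $\kappa$-packedness caps this intersection at $O(\kappa \Delta(u))$, I get $\alpha_u = O(\kappa)$ and analogously $\beta_v = O(\kappa)$. For the fan-in, each partner cell contains a vertex of $\gamma_Q$, so a standard "curve of length $L$ pierces $O(L/\Delta+1)$ cells of side $\Delta$" argument applied to $\gamma_Q$ inside the radius-$O(\Delta(u)/\eps)$ ball about $u$ (of total $\gamma_Q$-length $O(\kappa \Delta(u)/\eps)$) limits the fan-in to $O(\kappa/\eps)$. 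For $\kappa$-bounded sequences, I would invoke the structural estimates already developed by Aronov \etal~\cite{AHKWW2006}: the $\kappa$-bounded property localizes curve excursions so that only $O(1)$ MCSs per cell are possible, while packing the $\Theta(\Delta)$-cells into the $O(\Delta/\eps)$-ball contributes a $\kappa^d/\eps^d$ fan-in factor. For backbone sequences, the pairwise-distance lower bound forces $\hm_u, \hn_v = O(\Delta(u)^d)$ and upper-bounds the number of active cells per scale; a scale-by-scale summation that balances node count at scale $\Delta$ against fan-in and per-rectangle perimeter then produces the $n^{2-1/d}$ exponent.

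Collecting the three ingredients in each case and plugging into the per-pair accounting formula gives the three claimed bounds. The hardest step I foresee is the $\kappa$-packed fan-in bound: a naive volume argument would load an extra $(\kappa/\eps)^{d-1}$ factor, and only the curve-length packing argument recovers the tight $O(\kappa/\eps)$ figure needed to match the statement; the analogous balancing for backbone sequences across scales is the second delicate piece, since it is the scale that attains equality between cell-count and fan-in that determines the $2-1/d$ exponent.
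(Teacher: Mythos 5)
Your accounting of $|\boundaries|$ via rectangle perimeters, the reduction to $\sum_{u}\hm_u\sum_{v\in N(u)}\beta_v$ (plus the symmetric term), and the $O(n\log n)$ bound on $\sum_u\hm_u$ all match the paper's proof. The gap is in how you estimate the inner sum $\sum_{v\in N(u)}\beta_v$: you explicitly propose a ``product-of-bounds'' --- (per-cell MCS count) times (fan-in $|N(u)|$) --- and for $\kappa$-packed curves this gives $O(\kappa)\cdot O(\kappa/\eps)=O(\kappa^2/\eps)$, a factor $\kappa$ worse than the $O(\kappa/\eps)$ needed to match the stated bound. The paper avoids the product entirely: every second-type MCS of $Q$ in some $\hsq_v$ with $v\in N(u)$ consumes length at least $\Delta(v)/2=\Theta(\Delta(u))$ of $\gamma_Q$ inside the ball of radius $O(\Delta(u)/\eps)$ around $\sq_u$, and since each point of that ball lies in $O(1)$ expanded cells $\hsq_v$ over the $O(1)$ relevant levels, the \emph{total} number of such MCSs over all $v\in N(u)$ is $O(1)\cdot|\gamma_Q\cap\ball|/\Delta(u)=O(\kappa/\eps)$ directly (the first-type MCSs, those ending at $q_n$, contribute only $O(1)$ in total). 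You correctly sensed that a curve-length argument is needed, but you applied it only to the fan-in; it must be applied to the aggregate MCS count, which also subsumes the fan-in bound since each $v\in N(u)$ has $\beta_v\ge 1$.

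The other two cases have related problems. For $\kappa$-bounded curves your factor attribution is reversed and partly unjustified: the per-cell MCS count is not $O(1)$ but $O(\kappa^d)$ (consecutive MCS start points in $\sq_v$ are at distance at least $\Delta(v)/(\kappa+2)$ by $\kappa$-boundedness, and a packing argument in $\sq_v$ then gives $O(\kappa^d)$ of them), while the fan-in obtainable from volume packing is only $O(1/\eps^d)$, not $O(\kappa^d/\eps^d)$; the final product happens to be correct, but not by the route you describe. For backbone sequences the decisive observation is missing: each MCS must contain a point of $Q$ in a shell of thickness $c_2$ along $\partial\sq_v$, and since distinct points of $Q$ are at distance more than $1$ apart, the per-cell MCS count is $O(\Delta(v)^{d-1})$ --- not the $O(\Delta(v)^d)$ that your point-count bound $\hn_v=O(\Delta(v)^d)$ would give. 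Balancing $O(\Delta^{d-1}/\eps^d)$ against the complementary bound $O(n/\Delta)$ (each second-type MCS contains $\Omega(\Delta)$ points of $Q$) at $\Delta=\eps n^{1/d}$ yields $O(n^{1-1/d}/\eps)$; with $\Delta^d$ in place of $\Delta^{d-1}$ the balance lands elsewhere and you would only prove the weaker exponent $2-1/(d+1)$.
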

\begin{proof}
  For any well-separated pair of quadtree nodes $(u,v) \in \family$, let $\hm_u = |P\cap
  \hsq_u|, \hn_v = |Q \cap \hsq_v|$.  Recall that $\Aset$ denotes the set of active nodes of
  quadtree $\T$, and $\alpha_u$ (resp.~$\beta_v$) is the number of maximal contiguous
  subsequences of $P$ in $\hsq_u$ (resp.~$Q$ in $\hsq_v$) computed by our algorithm.  
  Since $\hsq_u$ is concentric with $\sq_u$ with twice the side length of $\sq_u$,
  $\sum_{u\in \Aset} \hm_u \le 2^d \sum_{u\in \Aset} m_u = O(m \log n)$.
  Let $N(u) = \{v \mid (u,v) \in \family\}$.
  The total number of rectangle boundary points is 
  \begin{equation}
    |\boundaries| \leq 2 \sum_{(u,v) \in \family} (\hm_u \beta_v + \alpha_u \hn_v) 
    = 2 \sum_{u \in \Aset} \hm_u \sum_{v \in N(u)} \beta_v + 2 \sum_{v \in
    \Aset}\hn_v \sum_{u \in N(v)} \alpha_u.
    \label{eq:num_boundary_pts}
  \end{equation}
  We will show next that for any $u \in \Aset$, $\sum_{v\in N(u)} \beta_v$ is bounded
  by $O(\kappa/\eps)$ for $\kappa$-packed sequences, $O(\kappa^d/\eps^d)$ for
  $\kappa$-bounded sequences, and $O(n^{1-1/d}/\eps)$ for backbone sequences.
  The first part of (\ref{eq:num_boundary_pts}) is then bounded by
  $O(\frac{\kappa}{\eps} n \log n)$, $O(\frac{\kappa^d}{\eps^d} n \log n)$, and
  $O(\frac{1}{\eps} n^{2-1/d} \log n)$ for $\kappa$-packed, $\kappa$-bounded, and
  backbone sequences.  Symmetrically, the second part of (\ref{eq:num_boundary_pts}) has the
  same bound, and the lemma follows.

  We now bound $\sum_{v\in N(u)} \beta_v$ for any $u \in \Aset$.
  By Lemma~\ref{lemma:wspd}, there exists a constant $c$ such that for any $v \in N(u)$, 
  $\hsq_v$ is contained in a ball $\ball$ concentric with $\sq_u$ of
  radius $\frac{c}{\eps} \Delta(u)$.  

  There are two types of maximal contiguous subsequence $[y_b^-:y_b^+]$ of $Q \cap
  \hsq_v$ computed by our algorithm: (i) $q_{y_b^+} = q_n$ is the last point of $Q$, and
  (ii) $q_{y_b^++1}$, the point of $Q$ after the last point in the MCS lies outside of
  $\hsq_v$.  The first type of MCS is bounded by 
  the number of $v$'s such that $\hsq_v$ contains the last point of $Q$, $q_n$.  
  Suppose node $u$ is at level $t$ of the quadtree.
  By Lemma~\ref{lemma:wspd}(ii), such $v$'s can be from levels $t-1, t, t+1$.  Moreover,
  at each level, $q_n$ can be in the $\hsq_v$ of at most $2^d$ $v$'s.  Thus, the number of maximal
  contiguous subsequences of the first type is at most $2^{d}\times 3 = O(1)$.  In the
  following, we bound the second type separately for each family of input sequence.

  \textbf{\textit{$\kappa$-packed sequences}}.  Since the MCS starts inside $\sq_v$ and
  leaves $\hsq_v$ before the next MCS of $\Sub_v(Q)$ starts, the length of $\gamma_Q$
  between $q_{y_b^-}$ and $q_{y_b^+}$ is at least $\Delta(v)/2$ (see
  Figure~\ref{fig:contig_subseq}).  
  Let $\hL_v$ be the length of $\gamma_Q \cap \hsq_v$.  Then 
  $$ \sum_{v \in N(u)} \beta_v \le O(1) + \sum_{v \in N(u)} \frac{\hL_v}{\Delta(v)/2} \le
  O(1) + \frac{4}{\Delta(u)} \sum_{v\in N(u)} \hL_v.$$
  The last inequality follows from Lemma~\ref{lemma:wspd}(ii).  Because of the following
  four conditions: the side length of
  $\hsq_u$ is twice that of $\sq_u$, the nodes in $N(u)$ belong to three levels of
  $\T$ (Lemma~\ref{lemma:wspd}(ii)), the cells of the nodes at the same level of
  $\T$ are disjoint, and $\hsq_v \subseteq \ball$ for all $v \in N(u)$, we can conclude
  that 
  $$ \sum_{v \in N(u)} \hL_v \le 3\cdot 2^d |\gamma_Q \cap \ball| \le \frac{3\cdot 2^d c
  \kappa}{\eps}.$$
  The last inequality follows because $P$ is $\kappa$-packed sequence.  Hence $\sum_{v \in
  N(u)} \beta_v = O(\kappa/\eps)$, as claimed.

  \textbf{\textit{$\kappa$-bounded sequences}}. 
  We first show that for any two MCSs $[y_1^-:y_1^+]$ and $[y_2^-:y_2^+]$, $||q_{y_1^-} q_{y_2^-}|| \ge
  \Delta(v)/(\kappa+2)$.  This is because between points $q_{y_1^-}$ and $q_{y_2^-}$, the
  curve $\gamma_Q$ goes from inside $\sq_v$ to outside $\hsq_v$ which spans
  distance at least $\Delta(v)/2$.  Let $q$ be the intersection of this portion of
  $\gamma_Q$ with the boundary of $\hsq_v$.  By $\kappa$-boundedness, $\gamma_Q(y_1^-:y_2^-) \subseteq
  \ball(q_{y_1^-}, \frac{\kappa}{2} ||q_{y_1^-} q_{y_2^-}||) \cup \ball(q_{y_2^-},
  \frac{\kappa}{2} ||q_{y_1^-} q_{y_2^-}||)$.  Therefore, $(1+\kappa/2) ||q_{y_1^-}
  q_{y_2^-}|| \ge ||q_{y_1^-} q|| \ge \Delta(v) /2$, and the claim follows.  By a packing
  argument, the number of MCSs in $\hsq_v$ is bounded by $O(\kappa^d)$.
  Finally, $|N(u)| = O(1/\eps^d)$ 
  by another packing argument in the ball $\ball$.  So the number of second-type MCSs
  in all $\hsq_v$'s for $v \in N(u)$ is $O(\kappa^d/ \eps^d)$.
	
  \textbf{\textit{Backbone sequences}}. By the
  property that two consecutive points on a backbone sequence have distance between
  $c_1$ and $c_2$, there must exist one point on any MCS
  in the shell along the boundary of $\sq_v$ with thickness $c_2$. 
  The volumn of the shell is $O(\Delta(v)^d -
  (\Delta(v) - c_2)^d) = O(\Delta(v)^{d-1})$.  Furthermore, any two points on 
  $Q$ are at least distance $1$ apart.  So the number of MCSs is bounded by $O(\Delta(v)^{d-1})$. 
  Since $|N(u)| = O(1/\eps^d)$, the number of MCSs in all $\hsq_v$'s for $v \in N(u)$
  is $O(\frac{\Delta(v)^{d-1}}{\eps^d})$. On the other
  hand, each second-type MCS consumes a portion of $\gamma_Q$ of
  length at least $\Delta(v)/2$; this means that the subsequence contains 
  $\Omega(\Delta(v)/c_2) = \Omega(\Delta(v))$ points of $Q$. Since there are a total
  of $n$ points in $Q$, the total number of MCSs in all $\hsq_v$'s with $(u, v) \in \family$ is also bounded by
  $O(\frac{n}{\Delta(v)})$.
  The worst case happens when
  $ \frac{\Delta(v)^{d-1}}{\eps^d} = (\frac{n}{\Delta(v)})$, which leads to
  $\Delta(v) = \eps n^{1/d} $.  So the total number of second-type MCSs
  is $O(\frac{1}{\eps} n^{1-1/d})$.
\end{proof}

To bound the running time for constructing the family $\family$ of well-separated
node pairs, we bound, for each active node $u\in \Aset$, the number of times
\pair$(u,v)$ is called for some $v \in \Aset$.  This can be bounded
using similar arguments as the proof of Lemma~\ref{lemma:num_boundary_pts}, and we
can show that the time for constructing $\family$ is 
$O(\frac{\kappa}{\eps} n \log n)$, $O(\frac{\kappa^d}{\eps^d} n \log n)$, and
$O(\frac{1}{\eps} n^{2-1/d})$ for $\kappa$-packed, $\kappa$-bounded, and backbone
sequences, respectively.  Combining this bound with
Lemmas~\ref{lemma:2_approx},~\ref{lemma:run_time_step3}, and \ref{lemma:num_boundary_pts},
we obtain Theorem~\ref{thm:dtw}.

\def\aed{\hat{\ed}}
\def\wed{\mu}
\def\awed{\tilde{\wed}}

\section{Edit Distance} \label{sec:ED}
We now show how our DTW algorithm can be extended to compute a $(1+\eps)$-approximate
value of $\ed(P,Q)$.  Define $\V = [m+1] \times [n+1]$.
For $i < m+1$ and $j < n+1$, we have~$\wei(i,j) = ||p_i q_j||$.
Otherwise~$\wei(i,j)$ is undefined (e.g., $\wei(m+1, \cdot)$ and $\wei(\cdot, n+1)$
undefined).
We add an edge between every pair of neighboring points in $\V$.  The weight of a horizontal or vertical
edge is set to $g$ and the weight of a diagonal edge $\langle (i,j), (i+1, j+1) \rangle$
is set to $\wei(i,j)$.  The weight of an admissible path $\pi$ is defined as the sum of
weights of the edges along $\pi$.  As earlier, we define $\wed(i,j)$ to be the minimum
weight of an admissible path from $(1,1)$ to $(i,j)$.  Then $\ed(P,Q) = \wed(m+1, n+1)$.

We compute an approximate value of $\ed(P,Q)$ using the same 4-step algorithm as for $\dtw(P,Q)$, with
a different implementation of each step.  Step (ii)
of the algorithm remains the same, except that we add all the points on the
$(m+1)$-st column and $(n+1)$-st row of $\V$ to $\boundaries$.
In the
following, we give a simple $O(n)$-approximation for step (i), and point out
modifications needed in step (iii) to compute a value $\awed(i,j)$ for every $(i,j) \in
\boundaries$, such that $\wed(i,j) \le \awed(i,j) \le (1+\eps) \awed(i,j)$.

\paragraph*{$O(n)$-approximation.}
If $m = n$ and the weight of the monotone path~$\pi = \langle(1,1),
(2,2),\dots,(n+1,n+1)\rangle$ is at most $g$,
then it is returned as the optimal path.
Otherwise, $\ed(P,Q) \ge g$, and we set~$\dmin = g$.  Since $\ed(P,Q)$ is no larger than
the weight of an all-gap admissible path from $(1,1)$ to $(m+1, n+1)$, we have $\ed(P,Q)
\le 2(m+n)g \le \dmax = 4n\dmin$.

\paragraph*{Computing admissible paths.}
We describe how to compute $\awed(i,j)$,
for all $(i,j) \in \boundaries$, in the same row by row order.  
The main difference from DTW is that, 
since ED allows gaps, it is possible for the optimal admissible path between
$(1,1)$ and $(i,j)$ to have rectilinear subpaths through grid points that are not covered
by any rectangle.  
As in Section~\ref{sec:fill_table}, we consider whether there there exists a rectangle hit by
$(i,j)$.  

First, assume there exists a rectangle $\rect = [x^-:x^+] \times [y^-:y^+] \in \rects$ hit
by $(i,j)$.  Similar to DTW, we divide the relevant points on the left and bottom
boundaries of $\rect$ into three disjoint subsets $N_1, N_2$ and $N_3$.  If $2g \le
\wei_{\rect}$, it is always preferable to take a rectilinear path inside
$\rect$.  Thus we can assume the admissible path to $(i,j)$ goes through
either $(x^-, j)$ or $(i, y^-)$, and we set $\awed(i,j) = \min\{ \awed(x^-,j) +
(i-x^-), \awed(i,y^-) + (j-y^-)\}$.
If $2g > \wei_{\rect}$, the minimum-weight path inside $\rect$ should take as
many diagonal steps as possible.  So we set
\begin{equation*}
  \awed(i,j) = \min\left\{ 
    \begin{array}{l}
      j \wei_{\rect} + (i-j)g + \min_{(a,b) \in N_1} (\awed(a,b) + (g-\wei_{\rect})b), \\
      j \wei_{\rect} + (i-j)g + \min_{(a,b) \in N_2} (\awed(a,b) + (g-\wei_{\rect})b - ag), \\
      i \wei_{\rect} + (j-i)g + \min_{(a,b) \in N_3} (\awed(a,b) + (g-\wei_{\rect})a)
    \end{array} 
  \right\}.
\end{equation*}
We use the same range-min data structure of Fischer and Heun to compute each $\awed(i,j)$
in amortized $O(1)$ time.  The key used for the data structure is $\awed(a,b) +
(g-\wei_{\rect})b$, $\awed(a,b) + (g-\wei_{\rect})b - ag$, and $\awed(a,b) +
(g-\wei_{\rect})a$ for $N_1, N_2$, and $N_3$, respectively.


Next, assume $(i,j)$ does not hit any rectangle in $\rects$.  If $\{(i-1,j), (i, j-1),
(i-1, j-1)\} \subset \boundaries$, and thus their $\awed$ values have been computed, it is
trivial to compute $\awed(i,j)$ in $O(1)$ time.  We now focus on the case where one of the
predecessors of $(i,j)$ is not in $\boundaries$.
Let $U = \bigcup_{\rect \in \rects} \rect$ denote the union of all rectangles in
$\rects$. A point $(h,k) \in U$ is on the boundary of $U$, denoted by $\partial U$, if
$(h,k)$ does not lie in the interior of any rectangle of $\rects$; so at
least one point of
$\{(h-1,k), (h+1, k), (h, k-1), (h, k+1) \}$ is not in $U$.  Consider any admissible path $\pi$ from
$(1,1)$ to $(i,j)$ whose total weight is at most $\dmax$.  
Let $(a,b)$ denote the last point of $\boundaries$ on $\pi$ before
reaching $(i,j)$.  
The subpath of $\pi$ between $(a,b)$ and $(i,j)$ must be outside 
$U$, and it can only contain gaps since the weight of any point
outside $U$ is greater than $\dmax$, except the first step out of $(a,b)$, which
costs $\wei(a,b)$ if it is diagonal.
Let $(i_0, j)$ (resp.~$(i, j_0)$) be the first point of $\boundaries$ to the left of
(resp.~below) $(i,j)$ on row $j$ (resp.~column $i$).
Let $\partial U_{ij} = [i-1]\times [j-1] \cap \partial U$ denote the points on $\partial
U$ that are lower-left of
$(i,j)$.   We set
\begin{equation}
  \awed(i,j) = \min\left\{
    \begin{array}{l} 
      \awed(i_0,j) + (i-i_0)g,\\
      \awed(i,j_0) + (j-j_0)g,\\
      \min_{(a,b) \in \partial U_{ij}} (\awed(a, b) + (i+j - a - b)g + \min(0, \wei(a,b)-2g)) 
    \end{array} \right\}.
  \label{eq:compute_ed1}
\end{equation}
To compute $\awed(i,j)$, we use a different and simpler range-min 
data structure for $\partial U$ with key $\awed(a,b) - (a+b)g + \min(0, \wei(a,b)-2g)$,
that supports the decrease-key operation and query in $O(\log n)$ time.  More specifically, we
maintain a minimum over points of $\partial U$ in each column as we traverse $\boundaries$
row by row.  Then we maintain the column minimums in a complete binary tree where each
leaf corresponds to a column and an internal node stores the minimum over the leaves of
the subtree rooted at this node.  Note that the column minimums are always non-increasing
while we perform the row by row computation.
When the minimum of a column corresponding to some leaf $v$ gets decreased, we update the 
minimum information stored at each node along the path from $v$ to the root of the binary
tree.  This takes $O(\log n)$ time.  The last term in (\ref{eq:compute_ed1}) can be
computed by querying the the complete binary tree with range $[i-1]$ in $O(\log n)$ time.

\begin{lemma}
  The number of points on $\partial U$ is $O(\frac{\kappa}{\eps}n)$, $O(\frac{\kappa^d}{\eps^d} n)$, and
  $O(\frac{1}{\eps^{1-1/(d+1)}} n^{2-1/(d+1)})$ for $\kappa$-packed, $\kappa$-bounded, and 
  backbone sequences, respectively.
  \label{lemma:ed_bound}
\end{lemma}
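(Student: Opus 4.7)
I partition $\partial U$ into $\partial U^N \cup \partial U^S \cup \partial U^E \cup \partial U^W$ according to which grid-neighbor of a point lies outside $U$, and by symmetry focus on $\partial U^N$. For $(i,j) \in \partial U^N$, the condition $(i,j+1) \notin U$ combined with Lemma~\ref{lemma:rectangle_property}(ii) gives $||p_i q_{j+1}|| > \dmax$. By property (P2), there is a unique pair $(u,v) \in \family$ with $p_i \in P_u$, $q_j \in Q_v$, and a unique rectangle $R_{ab} \in \rects_{uv}$ containing $(i,j)$; since $(i,j+1) \notin R_{ab}$ it must be that $j = y_b^+$, so $q_j$ is the last point of MCS $b$ of $Q$ in $\hsq_v$ and $q_{j+1} \notin \hsq_v$. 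Each point of $\partial U^N$ is thus charged once to a triple $(v, b, i)$ consisting of a $Q$-MCS endpoint together with a compatible index into $P$.

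For the $\kappa$-packed and $\kappa$-bounded cases, I bound $|\partial U^N|$ by the product of the number of $Q$-MCS endpoints across $\family$ and the number of valid $p_i$'s per endpoint. Summing over pairs gives $\sum_{(u,v) \in \family} \beta_v = \sum_{u \in \Aset} \sum_{v \in N(u)} \beta_v$; plugging in the per-$u$ bounds $\sum_{v \in N(u)} \beta_v = O(\kappa/\eps)$ and $O(\kappa^d/\eps^d)$ from the proof of Lemma~\ref{lemma:num_boundary_pts}, together with $|\Aset| = O(n)$, yields endpoint totals of $O(\kappa n/\eps)$ and $O(\kappa^d n/\eps^d)$. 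For each endpoint $(v,b)$, a valid $p_i$ must lie in $\sq_u$ for some $u \in N(v)$ with $\Delta(u) = O(\eps \dmax)$, and must fall inside the thin shell $\ball(q_{y_b^+}, O(\dmax)) \setminus \ball(q_{y_b^+ + 1}, \dmax)$. A packing argument in this shell that uses the $\kappa$-packedness (resp.\ $\kappa$-boundedness) of $P$ limits the count of such $p_i$ to $O(1)$ per endpoint, completing the first two bounds after multiplying by the other three directional sets.

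For backbone sequences the analysis is rebalanced. As in Lemma~\ref{lemma:num_boundary_pts}, the inner sum $\sum_{v \in N(u)} \beta_v$ is bounded both by $O(\Delta(v)^{d-1}/\eps^d)$ (packing MCS endpoints into the shell along $\partial \hsq_v$, using $c_1 \le ||p_{i-1}p_i|| \le c_2$ and the $|N(u)| = O(1/\eps^d)$ estimate) and by $O(n/\Delta(v))$ (each second-type MCS consumes $\Omega(\Delta(v))$ points of $Q$). Because the (P2) charging counts each $(i,j) \in \partial U^N$ only once, the extra $\log n$ factor coming from $\sum_u \hm_u$ in Lemma~\ref{lemma:num_boundary_pts} is absent here, which shifts the optimal balance: equating $\Delta(v)^{d-1}/\eps^d$ with $n/\Delta(v)$ yields $\Delta(v) = \Theta(\eps^{d/(d+1)} n^{1/(d+1)})$, and summing the resulting per-$u$ bound over $|\Aset| = O(n)$ active nodes gives $O(n^{2-1/(d+1)}/\eps^{1-1/(d+1)})$. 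The main obstacle will be the per-endpoint packing step for the first two families and justifying that the (P2) charging genuinely eliminates the $\log n$; concretely, showing that although each $p_i$ lies in $P_u$ for $\Theta(\log n)$ active nodes, only $O(1)$ of those nodes contribute to $\partial U^N$ through $p_i$, so that $\sum_u m_u$ enters the final count at its undivided cost of $O(n)$ rather than $O(n\log n)$.
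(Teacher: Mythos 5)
Your directional decomposition of $\partial U$ into row/column entering and exiting points is essentially the paper's starting point (the paper counts, for each row $j$, the points $(i_1,j)\in U$ with $(i_1-1,j)\notin U$, and handles the other directions symmetrically). The gap is in your central counting step: the claim that each $Q$-MCS endpoint $(v,b)$ admits only $O(1)$ compatible indices $i$ via ``a packing argument in this shell.'' This is false. The set of compatible $i$'s for a fixed $j=y_b^+$ is $\{i : (i,j)\in U,\ (i,j+1)\notin U\}$, and if, say, $q_{j+1}$ lies at distance more than $2\dmax$ from all of $P$ while $q_j$ sits next to a dense cluster of vertices of $P$, then \emph{every} $i$ with $(i,j)\in U$ is compatible --- potentially $\Theta(n)$ of them. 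Neither $\kappa$-packedness nor $\kappa$-boundedness prevents this, because both conditions constrain the \emph{arclength} of $\gamma_P$ in a region, not the number of vertices, which may be arbitrarily dense; and the region $\ball(q_{y_b^+},O(\dmax))\setminus\ball(q_{y_b^++1},\dmax)$ is not thin when $q_j\neq q_{j+1}$ in any case. Since the total you are trying to bound is exactly the sum over endpoints of the number of compatible $i$'s, the (endpoints)$\times O(1)$ factorization cannot be repaired on average without circularity. The paper avoids this by charging each maximal run of $U$ along row $j$, from its entering point $(i_1,j)$ to the first $(i_2,j)\notin U$, to $\Omega(\eps\dmax)$ of arclength of $\gamma_P$ inside the single ball $\ball(q_j, 4\sqrt{d}\,\dmax)$; establishing that charge requires a case analysis on whether $\Delta(u)\le\frac{\eps}{2c}\dmax$ (in which case $\|p_{i_1}q_j\|\le\frac34\dmax$ but $\|p_{i_2}q_j\|>\dmax$, so the curve traverses $\Omega(\dmax)$) or not (in which case $\hsq_u\subseteq\ball$ and the curve must exit $\hsq_u$, traversing $\Omega(\eps\dmax)$). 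Because consecutive runs in a row charge disjoint index intervals of $P$, packedness of $\gamma_P$ in that one ball gives $O(\kappa/\eps)$ runs per row with no quadtree summation --- which is also the real reason the $\log n$ disappears, not the ``only $O(1)$ of the $\Theta(\log n)$ active nodes contribute'' mechanism you propose.

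Your backbone analysis also does not produce the claimed exponent. Balancing $\Delta(v)^{d-1}/\eps^d$ against $n/\Delta(v)$, as you write, gives $\Delta(v)=\eps n^{1/d}$ and a value of $n^{1-1/d}/\eps$ per node --- this is exactly the DTW balance of Lemma~\ref{lemma:num_boundary_pts} and leads to $n^{2-1/d}$, not $n^{2-1/(d+1)}$; the value $\Delta(v)=\Theta(\eps^{d/(d+1)}n^{1/(d+1)})$ does not follow from the equation you set up. The correct competing bounds for ED are per \emph{row}: the number of entering points in row $j$ is at most the number of points of $P$ in $\ball(q_j,O(\dmax))$, which is $O(\dmax^{d})$ by the minimum separation of backbone points (a volume count, absent from your argument), and also $O(m/(\eps\dmax))$ by the arclength charge. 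Balancing these at $\dmax=(m/\eps)^{1/(d+1)}$ gives $(m/\eps)^{1-1/(d+1)}$ per row and $O(n^{2-1/(d+1)}/\eps^{1-1/(d+1)})$ overall.
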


\begin{proof}
  We analyze the number of times we enter of $U$ as we (conceptually) traverse each row 
  $j$ of $\V$ from column 1 to column $m$.  Let
  $\ball = \ball(q_j, 4\sqrt{d}~\dmax)$ be a ball centered at $q_j$ with radius
  $4\sqrt{d}~\dmax$.  Let $(i_1, j)$ 
  be an entering point of $U$, i.e., $(i_1,j) \in U$ and $(i_1-1, j) \notin
  U$, and let $(i_2, j)$ be the first point to the right of $(i_1,j)$ that is
  not in $U$.  We
  claim that the polygonal curve of the subsequence $P(i_1:i_2)$ has length $\Omega(\eps \dmax)$ inside
  $\ball$.  Let $\rect$ denote a rectangle of which $(i_1, j)$ is a
  (left) boundary point, and let $(u,v)$ denote the quadtree node pair from
  which $\rect_1$ is constructed, where $p_{i_1} \in \hsq_u$ and $q_j \in \hsq_v$.  
  If $\Delta(u) \le \frac{\eps}{2c} \dmax$, where
  $c$ is the constant in Lemma~\ref{lemma:wspd}(iii), then by Lemma~\ref{lemma:wspd}(iii),
  $\dsq(u, v) \le \frac{c}{\eps}\Delta(u) \le \dmax/2$.  Note that $\Delta(v) \le
  2\Delta(u) = O(\eps\dmax)$.  So $||p_{i_1} q_j|| \le \dsq(u,v) + O(\eps \dmax) \le
  \frac{3}{4} \dmax$.  Moreover, since $(i_2, j) \notin
  U$, it must be that $||p_{i_2} q_j|| > \dmax$.  Therefore, the polygonal curve
  of $P(i_1:i_2)$ contains $\Omega(\dmax)$ length inside ball $\ball$.  On the other hand, suppose $\Delta(u) >
  \frac{\eps}{2c} \dmax$.  Then $\hsq_u$ must be contained in ball $\ball$ 
  because the family of well-separated pairs are constructed by invoking
  the \pair~procedure on neighboring nodes of size $\dmax$.  So the polygonal curve of 
  $P(i_1:i_2)$ has $\Omega(\eps \dmax)$ length inside $\ball$ 
  because $P(i_1:i_2)$ goes from inside $\sq_u$ to outside of $\hsq_u$.  Thus, the claim
  is true.

  Using the same argument as in Lemma~\ref{lemma:num_boundary_pts}, we can show that the
  number of such entering points of $U$ is 
  $O(\frac{\kappa}{\eps})$ for $\kappa$-packed sequences and $O(\frac{\kappa^d}{\eps^d})$
  for $\kappa$-bounded sequences.  
  For backbone sequences, the bound is slightly weaker as
  we can not guarantee that there is a point in $P(i_1:i_2)$ within distance $c_2$ of the
  boundary of $\ball$.  Instead, the number of points of $P$ inside $\ball$ is bounded by 
  $O(\dmax^d)$ since any two non-consecutive points of $P$ have distance at least 1 (by
  definition of backbone sequences); this value
  is also an upper bound on the number of entering points of $U$ on row
  $j$.  A second upper
  bound of $O(m/(\eps\dmax))$ can be obtained similarly as in
  Lemma~\ref{lemma:num_boundary_pts}.  The worst case happens when $\dmax =
  (\frac{m}{\eps})^{1/(d+1)}$, which gives a bound of $(\frac{m}{\eps})^{1-
    \frac{1}{d+1}}$. Summing over $n$ rows, the bounds follows.
\end{proof}

Note that both the number of updates and the number of queries in the binary tree are
bounded by $|\partial U|$, and each update or query takes $O(\log n)$ time.  Moreover, the
case when there exists a rectangle hit by the current point takes the same time as for 
DTW.  Therefore Theorem~\ref{thm:ed} follows.

\section{Conclusion}

In this paper, we presented $(1+\eps)$-approximate algorithms for computing
the dynamic time warping (DTW) and edit distance (ED) between a pair of point sequences.
The running time of our algorithms is near-linear when the input sequences are
$\kappa$-packed or $\kappa$-bounded, and subquadratic when the input sequences
are protein backbone sequences.  Our algorithms are the first near-linear or
subquadratic-time algorithms known for computing DTW and ED for ``well-behaved'' 
sequences.  One interesting open question is whether there exists a near-linear 
algorithm for computing DTW and ED for backbone sequences in $\R^2$.  Another interesting open
problem is to identify other dynamic-programming based geometric optimization problems that can be 
solved using our approach, i.e., by visiting a small number of entries of the dynamic programming table using
geometric properties of the input.

\paragraph*{Acknowledgement.} The authors thank Sariel Har-Peled for many helpful
comments.

\bibliographystyle{abbrv}
\bibliography{dtw}

\newpage
\appendix

\section{2-approximation of $\dfr(P,Q)$}

Given a sequence $Q = \langle q_1, \dots, q_n\rangle$ in $\R^d$, a
subsequence $\tQ =
\langle Q_{i_1}, \dots, Q_{i_k} \rangle$ is a \emph{left $\mu$-simplification} of
$Q$ if 
\begin{enumerate}[(i)]
	\item $1\le i_1 < i_2 < \dots < i_k\le n$
	\item for any integer $j \in (i_{t-1}, i_t)$, $d(q_j, q_{i_{t-1}}) \le \mu$.
	\item for any $t \in [2:k]$, we have $d(q_{i_{t-1}}, q_{i_t}) > \mu$.
\end{enumerate}

We solve the decision version of the problem, i.e., deciding whether $\dfr(P,Q) < \delta$.
we perform a left $\mu$-simplification
on both sequences $P$ and $Q$ to obtain $\tilde{P}$ and $\tilde{Q}$ with $\mu = \eps
\delta / 2$.  According to \cite{AHKWW2006}, it suffice to solve the decision problem on
$\tilde{P}$ and $\tilde{Q}$ to obtain a $(1+\eps)$-approximation.  The time taken to solve
the decision problem is proportional to the number of entries in the dynamic programming
table that have value at most $\delta$; we call them reachable entries. This is equal 
to the sum of the number of points of $\tilde{P}$ in the ball $\ball(q, \delta)$ for 
all $q \in \tilde{Q}$.

Let $\gamma$ denote the $\kappa$-packed polygonal curve that connects the points of $\tilde{P}$.
Consider another ball $\ball'$ with the same center as $\ball$, but with a
radius $\delta+\mu$.  
By property of $\mu$-simplification, the portion of $\gamma$ between consecutive 
points in the sequence is of length at least $\mu$.
Thus every point of the simplified sequence inside $\ball$
is the left end-point of a unique portion of $\gamma$ of length at
least $\mu$ inside ball $\ball'$. By $\kappa$-packedness, the total length of
$\gamma$ inside $\ball'$ is $O(\kappa (\delta+\mu))$.  With $\mu =
O(\eps \delta)$, the total number of points in $\ball$ is bounded by 
$O(\kappa (\delta + \mu) / \mu) = O(\kappa (1 + \eps) / \eps) = O(\kappa / \eps)$, 
for $\epsilon \le 1$. 
Using the approximation algorithm for spherical range query introduced in the same paper,
these reachable entries can be computed in $O(\kappa / \eps)$ time.
Since there are at most $n$ points in 
$\tilde{P}$, the decision problem takes $O(n \kappa / \eps)$ time.

Using the technique of well-separated pair decomposition, we can obtain a set of $O(n)$
distances such that the distance between any pair of points from $P$ and $Q$ is 
$(1+1/5)$-approximated by one of the distances in the set. Therefore, by performing
$O(\log n/\eps) = O(\log n)$ binary searches, and using the above decision problem, 
we can obtain a $(1+\eps)$-approximation of $\dfr(P,Q)$ in $O(\frac{\kappa}{\eps} n \log n
)$ time.  Setting $\eps$ to a small constant, we obtain a 2-approximation.

\section{Proof of Lemma~\ref{lemma:wspd}}

First, we prove (i).  Suppose it is not true, i.e., $\max\{\Delta(u), \Delta(v)
\} > \min \{\Delta(\parent(u)), \Delta(\parent(v))\}$.  Without loss of
generality, assume $\Delta(u) \ge \Delta(v)$; then $\Delta(u) >
\Delta(\parent(v))$.  This means that the algorithm should have called
the \pair~procedure between the children of $u$ and $\parent(v)$, which is a
contradiction to the fact that \pair$(u,v)$ is called.

Next, we prove (ii).  Suppose $\Delta(v) < \Delta(u)/2$.  So $\Delta(\parent(v))
= 2 \Delta(v) < \Delta(u)$.  This is again a contradiction because
the algorithm should have called the \pair~procedure between $\parent(v)$ and
the children of $u$.  Similarly $\Delta(v) > 2\Delta(u)$ also leads to a
contradiction.

Finally, we prove (iii). Without loss of generality, suppose \pair$(u,v)$ is
recursively called by \pair$(\parent(u), v)$.  Since the algorithm does
not terminate at pair $(\parent(u), v)$, it must be that $\max\{\Delta(\parent(u)),
\Delta(v)\} > \eps \dsq(\parent(u), v)$.  So
$$\dsq(u,v) \le \dsq(\parent(u),v) + O(\Delta(u)) \le \frac{1}{\eps} \max\{\Delta(\parent(u)),
\Delta(v)\} + O(\Delta(u)),$$ which is $O(\Delta(u)/\eps)$ by (ii).

\end{document}